\documentclass{article}

\usepackage[utf8]{inputenc}
\usepackage[english]{babel}
\usepackage[T1]{fontenc}
\usepackage{csquotes}
\usepackage[dvipsnames]{xcolor}

\usepackage{amsfonts,amsthm,amsmath,amssymb,mathtools}
\usepackage{algorithm, algpseudocode}
\usepackage{dsfont,newtxtext,newtxmath,microtype}

\usepackage{graphicx, quantikz, tikz,tikz-3dplot}
\usepackage{authblk}
\usepackage{booktabs,multirow}
\usepackage{braket}
\usepackage{cancel}
\usepackage[font=small]{caption}
\usepackage{comment}
\usepackage{diagbox}
\usepackage{enumerate}
\usepackage{float}
\usepackage{fullpage}
\usepackage[section]{placeins}
\usepackage{quantikz}
\usepackage{subfigure}
\usepackage{slashed}
\usepackage{todonotes}
\usepackage{bbm} 
\usepackage{yfonts}
\usepackage{algorithm}
\usepackage{algpseudocode}

\pgfdeclarelayer{background}
\pgfsetlayers{background,main}

\definecolor{navyblue}{rgb}{0.0, 0.0, 0.5}
\definecolor{LightPink}{rgb}{0.858, 0.188, 0.478}

\usepackage[colorlinks=true, urlcolor=blue,citecolor=purple,anchorcolor=blue, linkcolor=LightPink]{hyperref}

\newcommand{\epsint}{\epsilon_\mathrm{ext}}

\newcommand{\epsdata}{\epsilon_\mathrm{data}}

\newcommand{\cT}{\mathcal{T}}

\newcommand{\cG}{\mathcal{G}}

\newcommand{\cE}{\mathcal{E}}
\newcommand{\cI}{\mathcal{I}}

\newcommand{\obsv}{A}
\newcommand{\tobsv}{\tilde{\obsv}}
\newcommand{\state}{\rho}
\newcommand{\tstate}{\tilde{\state}}

\newcommand{\estimator}{\hat{\obsv}}
\newcommand{\testimator}{\tilde{\hat{\obsv}}}


\DeclarePairedDelimiter\norm{\lVert}{\rVert}
\DeclarePairedDelimiter\dnorm{\lVert}{\rVert_\diamond}
\DeclarePairedDelimiter\expval{\langle}{\rangle}

\DeclareMathOperator{\ad}{ad}
\DeclareMathOperator{\tr}{tr}
\DeclareMathOperator{\polylog}{polylog}

 
\usepackage[capitalise]{cleveref}
\Crefname{lemma}{Lemma}{Lemmas}
\Crefname{proposition}{Proposition}{Propositions}
\Crefname{definition}{Definition}{Definitions}
\Crefname{theorem}{Theorem}{Theorems}
\Crefname{conjecture}{Conjecture}{Conjectures}
\Crefname{corollary}{Corollary}{Corollaries}
\Crefname{example}{Example}{Examples}
\Crefname{section}{Section}{Sections}
\Crefname{appendix}{Appendix}{Appendices}
\Crefname{figure}{Fig.}{Figs.}
\Crefname{equation}{Eq.}{Eqs.}
\Crefname{table}{Table}{Tables}
\Crefname{item}{Property}{Properties}
\Crefname{remark}{Remark}{Remarks}
\Crefname{fact}{Fact}{Facts}

\newtheorem{theorem}{Theorem}

\newtheorem{lemma}[theorem]{Lemma}

\usepackage{ltablex}

\newcommand\prob\textsc
\makeatletter
\newcommand{\probleminput}[1]{\gdef\@probleminput{#1}}
\newcommand{\problemquestion}[1]{\gdef\@problemquestion{#1}}
\newcommand{\problempromise}[1]{\gdef\@problempromise{#1}}
\DeclareDocumentEnvironment{problem}{}{
\probleminput{}\problempromise{}\problemquestion{}
\leavevmode
}{
  \par\addvspace{0\baselineskip}
  \noindent
  \begin{itemize}
      \item[\textbf{Input:}] \@probleminput
\ifx\@problempromise\empty%
\else%
      \item[\textbf{Promise:}] \@problempromise
\fi%
      \item[\textbf{Output:}] \@problemquestion
  \end{itemize}
}
\makeatother

\usepackage[backend=biber,style=alphabetic,doi=false,isbn=false,url=false,maxbibnames=7,maxcitenames=2]{biblatex}
\bibliography{references}
\newbibmacro{string+doi}[1]{\iffieldundef{doi}{#1}{\href{https://dx.doi.org/\thefield{doi}}{#1}}}
\DeclareFieldFormat{title}{\usebibmacro{string+doi}{\mkbibemph{#1}}}
\DeclareFieldFormat[article]{title}{\usebibmacro{string+doi}{\mkbibquote{#1}}}
\DeclareFieldFormat[incollection]{title}{\usebibmacro{string+doi}{\mkbibquote{#1}}}
\addbibresource{references.bib}

\title{Randomly Compiled Quantum Simulation with Exponentially Reduced Circuit Depths}

\author[1,2]{\href{https://orcid.org/0000-0002-6077-4898}{James~D.~Watson}}

\affil[1]{Joint Center for Quantum Information \& Computer Science, National Institute of Standards and Technology and University of Maryland, College Park}
\affil[2]{Department of Computer Science and Institute for Advanced Computer Studies, University of Maryland, College Park}

\date{}

\begin{document}

{\begingroup
		\hypersetup{urlcolor=navyblue}
\maketitle
		\endgroup}

\begin{abstract}
    
The quantum stochastic drift protocol, also known as qDRIFT, has become a popular algorithm for implementing time-evolution of quantum systems.
    In this work we develop qFLO, a higher order randomised algorithm for time-evolution.
    To estimate an observable expectation value at time $T$ to precision $\epsilon$, we show it is sufficient to use circuit depths of $O(T^2\log(1/\epsilon))$ -- an exponential improvement over standard qDRIFT requirements with respect to $\epsilon$.
    The protocol achieves this using $O(1/\epsilon^2)$ repeated runs of the standard qDRIFT protocol combined with classical post-processing in the form of Richardson extrapolation.
    Notably, it requires no ancillary qubits or additional control gates making it especially promising for near-term quantum devices.
    Furthermore, it is well-conditioned and inherits many desirable properties of randomly compiled simulation methods, including circuit depths that do not explicitly depend on the number of terms in the Hamiltonian, making it an attractive algorithm for problems in quantum chemistry. 
\end{abstract}

\section{Introduction}

\yinipar{S}imulating the dynamics of a quantum system evolving under a Hamiltonian has been one of the key motivations for developing quantum computers, and believed to be a task that quantum computers achieve an exponential speed up relative to classical computers.
Dynamical simulation is key to a wide range of problems in physics including high energy and nuclear physics, quantum chemistry, and material science  \cite{jordan2012quantum, cao2019quantum,shaw2020quantum,roggero2020quantum, watson2023quantum, clinton2024towards}.
Beyond academic interest, it is believed that quantum simulation algorithms will find a wide range of uses in applying quantum computers to design catalysts, batteries, and other materials.

With these use-cases in mind, a wide range of quantum simulation algorithms have been designed including product formulae \cite{lloyd1996universal}, linear combination of unitaries (LCU) \cite{childs2012hamiltonian, low2019well}, and qubitization \cite{low2019hamiltonian}.
Product formulae and its variants are among the most popular, in part due to their good empirical performance, ease of implementation, and conceptual simplicity.
Although the product formula approach by itself is asymptotically worse than LCU or qubitization methods, in practice they often perform similarly or better than their LCU and qubitization counter-partners in the regimes of practical interest \cite{babbush2015chemical,childs2018toward}, and often take advantage of the physics inherent to the system they are simulating ~\cite{tran2020destructive,childs2021theory, tran2021faster, csahinouglu2021hamiltonian, zhao2022hamiltonian, zhao2024entanglement}.
Indeed, other non-Trotter algorithms have been shown to be unable to take advantage of the physics of the system in general \cite{zlokapa2024hamiltonian}.
Furthermore, the performance of these product formulae can be improved using classical post-processing if we are only interested in the expectation values of time-evolved observables rather than the full state \cite{rendon2024improved,watson2024exponentially, cook2024parametric}.
With these benefits in mind, a wide range of optimised product formulae have been developed\footnote{See for example Refs. \cite{yuan2019theory,campbell2019random,ouyang2020compilation,morales2022greatly, nakaji2023qswift,sharma2024hamiltonian, bosse2024efficient, chen2024adaptive, bagherimehrab2024faster}} and many methods for optimising the algorithm at the circuit level have been developed \cite{mckeever2023classically,kang2023leveraging}.

One adaptation of product formula-type methods is the quantum stochastic drift protocol (qDRIFT), which probabilistically implements a unitary evolution with probabilities proportional to the strength of the individual terms in the Hamiltonian \cite{campbell2019random}.
This is a so-called ``randomly compiled'' product formula known as opposed to deterministically compiled product formulae such as Trotter-Suzuki.
qDRIFT has many appealing features, including the fact that each time-step of the algorithm only requires implementing a number of exponential terms which are independent of the number of terms in the Hamiltonian.
This can be compared to Trotter-Suzuki product formulae, where each time-step requires an implementation of a number of terms directly proportional to the number of terms in the Hamiltonian's decomposition.
As such, qDRIFT has become particularly popular for quantum chemistry applications where, for a system with $n$ fermionic modes, there are often $O(n^4)$ of local terms in the Hamiltonian.

 In this work, we present qFLO, a higher-order randomly compiled algorithm for Hamiltonian simulation.
 qFLO is based on the qDRIFT protocol, but makes use of classical post-processing in the form of Richardson extrapolation to improve the error scaling while maintaining the benefits of qDRIFT.
 In particular, we show that varying the size of the time-steps utilised in the qDRIFT approach generates a flow towards the zero step-size limit, which corresponds to the non-Trotterized evolution. 
 By taking measurements of expectation values at finite step-sizes, we can then extrapolate along this flow using a Richardson extrapolation approach to get a better estimate for the zero step-size limit. 
 A similar approach has been shown to be successful in reducing error for staged product formulae, and here we show it can also be applied to randomly compiled formulae \cite{endo2019,watson2024exponentially}. 
 In particular, we achieve an exponentially better scaling in the circuit depths compared to regular qDRIFT, with the maximum circuit depth required scaling as 
 \begin{align}
     O\left( (\lambda T)^2\log(1/\epsilon) \right)
 \end{align}
 compared to $ O\left( \frac{(\lambda T)^2}{\epsilon} \right)$ for the regular qDRIFT protocol.
 Here, if the Hamiltonian is defined as $H=\sum_j h_j H_j$, $\norm{H_j}=1$ and $h_j>0$, then the parameter $\lambda$ is defined as $\lambda = \sum_jh_j$.
 Notably our approach requires no auxiliary qubits or other quantum resources beyond multiple runs of the qDRIFT channel.
 Thus, this work demonstrates that the method of extrapolating via classical post-processing can be successfully applied beyond staged product formulae such as Trotter-Suzuki.

 The structure of the paper is as follows. 
 In \cref{Sec:Overview} we give an outline of the proof methods and the algorithm.
 In \cref{Sec:Results} we discuss our results, the algorithm, and previous literature.
 After this, we proceed with the formal proof of correctness of the algorithm. \cref{Sec:Series_Expansion} gives a series expansion for the time-evolved observable under the iterated qDRIFT channel, \cref{Sec:Richarson_Extrapolation} gives the Richardson extrapolation procedure, \cref{Sec:Richardson_Bounds} gives error bounds on the Richardson estimator, and finally \cref{Sec:Robust_Richardson} shows that the Richardson estimator is robust to measurement error and gives circuit depth and gate count scaling.

\section{Overview of Methods}\label{Sec:Overview}

\paragraph{The qDRIFT Channel.}
We start by reviewing the qDRIFT channel as defined in Ref.~\cite{campbell2019random}.
Given a time-independent local Hamiltonian we can decompose it as $H = \sum_j h_j H_j,$ where $\norm{H_j} =  1$ and $h_j\in \mathbb{R}$. 
The decomposition is chosen such that $h_j>0$ and such that $e^{-iH_jt}$ is implementable for arbitrary times $t$.
Suppose we want to find the time evolution after time $T$. 
We define the qDRIFT channel as the channel
\begin{align}
    \cE(\rho) = \sum_j p_j e^{-i\lambda H_jt}\rho e^{i\lambda H_j t}
\end{align}
for a time-step $t$, where $p_j = \frac{h_j}{\lambda}$ and $\lambda = \sum_j h_j$. 
\citeauthor{campbell2019random} \cite{campbell2019random} shows that by expanding this channel to first order in $t$, one can derive the following error bounds
\begin{align}
     \cE(\rho) &= \rho - i\sum_j h_jt [H_j,\rho] + O(t^2) \\ 
     &= \rho - it [H,\rho] + O(t^2) \\
    \implies \dnorm{\cE(\rho)-e^{-iH   t}\rho e^{iH  t}} &= O(t^2). 
\end{align}

\noindent In particular, after $N$ applications of the qDRIFT channel and choosing a time-scale $t =  T/N$, \citeauthor{campbell2019random} uses submultiplicativity of norms and the triangle inequality to show that
\begin{align}\label{Eq:qDRIFT_Performance}
    \dnorm{\cE^N(\rho) - e^{-iH T}\rho e^{iHT}}= O\left(\frac{(\lambda T)^2}{N}\right).
\end{align}
Thus, by choosing $N=O((\lambda T)^2/\epsilon)$ to be sufficiently large, we can reduce the error $\epsilon$ to be arbitrarily small.

\paragraph{A Series Expansion of the Repeated qDRIFT Channel.}
\Cref{Eq:qDRIFT_Performance} gives error bounds on the performance of the qDRIFT channel.
In this work, we will improve on the error scaling for observables by first deriving an expansion for the repeated channel in terms of the parameter $1/N,$ and use this expansion to gain more information about the output.

In particular, we show that provided $2t <1/\lambda $, one can write the channel in the form of an exponential
\begin{align}
    \cE =: e^{-it \cG(t)},
\end{align}
where $\cG(t)$ is defined by this relation.
For the full channel evolution after $N$ applications and $t = T/N$, this becomes
\begin{align}
    \cE^N &= e^{-iNt \cG(t)} \\
    &= e^{-iT \cG(t)}.
\end{align}
For convenience, now we change variables and choose to work with the inverse step-size $s= 1/N$ so that (in an abuse of notation) $\cG = \cG(s)$ is now a function of $s$.
We can then expand the channel as a series in $s$ of the form
\begin{align}
    \cE^{1/s}(\rho) &= e^{-iT \cG(s)}(\rho) \\
    &= e^{-iHT}\rho e^{iHT} + \sum_{j=1}^\infty s^j \tilde{E}_{j+1,K}(\rho) + F_K(s,T)(\rho),
\end{align}
where $\tilde{E}_{j+1,K}$ are a set of superoperator coefficients and $F_K$ is a superoperator function, with $\dnorm{F_K}=O(s^K)$.
This allows us to write the time-evolved expectation value of the observable $\obsv$ under the qDRIFT channel in a similar series expansion
\begin{align}
    \tr\left[\obsv \  \cE^{1/s}(\rho)\right] = \tr\left[\obsv e^{-iHT}\rho e^{iHT}\right] +  \sum_{j=1}^\infty a_j s^j + h_K(s,T),
\end{align}
for some coefficients $a_j$ and function $h_K(s,T)=O(s^K)$.
Importantly, in the limit of an infinite number of time-steps,  $s\rightarrow 0$, we recover the exact time-evolution.
We rigorously derive the series in \cref{Sec:Series_Expansion}.

\paragraph{Richardson Extrapolation.}


So far we have shown that there exists a series expansion of the time-evolved local observable in terms of the inverse step-size of the qDRIFT channel.
Here we introduce Richardson extrapolation which we will apply to our series expansion to obtain better estimates for time-evolved expectation values \cite{richardson1911approximate, sidi_2003}.

The Richardson extrapolation procedure works as follows. 
Suppose we have a function $g(x)$ for which we wish to find $g(0)$ using samples of $g(x)$. 
Further suppose $g(x)$ has a series expansion 
\begin{align}
    g(x) = g(0) +  a_1 x + a_2 x^2 + \sum_{n=3}^\infty a_n x^n.
\end{align}
Suppose we can evaluate the function at an initial point $x_1$, we then have
\begin{align}
    g(x_1) = g(0) +  a_1 x_1 + a_2 x_1^2 + \sum_{n=3}^\infty a_n x_1^n.
\end{align}
Now choose a second point $x_2=x_1/k_2$
\begin{align}
    g(x_2) = g(0) +  a_1 \frac{x_1}{k_2} + a_2 \frac{x_1^2}{k_2^2} + \sum_{n=3}^\infty a_n \left(\frac{x_1}{k_2}\right)^n.
\end{align}
We can then compute an estimator $\hat{F}_1$ for $g(0)$ by cancelling the first-order terms as
\begin{align}
    \hat{F}_1(x_1) &=\frac{1}{1-k_2}\left( g(x_1) - k_2g(x_2)\right) \\
    &= g(0) + \frac{a_2x_1^2}{1-k_2}\left(1 - \frac{1}{k_2}\right) +  \frac{1}{{1-k_2}}\sum_{n=3}a_nx_1^n\left(1 - \frac{1}{k_2^{n-1}}    \right) \\
    &= g(0) + O(x_1^2).
\end{align}
In a similar manner, using more samples at $x_1/k_3, x_1/k_4,\dots $ we can construct higher order estimators.
Roughly speaking, using $m$ sampled points will allow us to construct an estimator $\hat{F}_m(x_1) = \sum_{j=1}^m b_k g(x_1/k_j)$ such that 
\begin{align}
    |\hat{F}_m(x_1) - g(0)| =O(x_1^{m-1}). 
\end{align}
We give rigorous results for the Richardson extrapolation procedure in \cref{Sec:Richarson_Extrapolation}.

\paragraph{Combining the Series Expansion and Richardson Extrapolation.}

We now apply the Richardson extrapolation scheme to the series expansion for the time-evolved observable.
Defining the function $f_A(s) =  \tr[\obsv \cE^{1/s}(\rho)],$ we that we can construct an estimator 
\begin{align}
    \estimator_m(s) = \sum_{j=1}^m b_j f_A(s/k_j).
\end{align}
To find the values $f_A(s/k_j)$ needed to construct the estimator, we can run the qDRIFT channel with the appropriate time-step size  $t_j = sT/k_j$ and take measurements.
Then, by carefully upper-bounding the coefficients in the series for $f_A(s)$, we see that the error of the order-$m$ Richardson estimator scales as 
\begin{align}
    \left|\estimator_m(s) - \expval{A(T)}\right| = O\left( s^m (\lambda T)^{2m} \right).
\end{align}
From this we can show that it is sufficient to choose $m=O(\log(1/\epsilon))$ and $\max_i \frac{k_j}{s} = \max_i N_i = O\left((\lambda T)^2 \log(1/\epsilon)\right)$.
We given a rigorous analysis of this in \cref{Sec:Richardson_Bounds}.

Finally, we realise that each of the measurements used to compute $f_A(s/k_j)$ will have some measurement error associated with it, and this will likely introduce error to our Richardson estimator.
To get error $\epsilon_\text{measure},$ for each measurement value, we will need $O(1/\epsilon^2_\text{measure})$-many samples, hence for each value of $N_i$ we need to run quadratically many times in the error parameter we want to achieve.
Moreover, we need to ensure the Richardson estimator is well conditioned --- i.e. it is sufficiently robust so that errors in the measurements do not ``blow up'' exponentially.
We give an analysis of the measurement error in \cref{Sec:Robust_Richardson}.

The result is that we need to choose $m=O(\log(1/\epsilon))$-many sample points, each of which we need to run a circuit of depth no more than $O((\lambda T)^2 \log(1/\epsilon))$ for a number of times scaling as $O(1/\epsilon^2)$.
This allows us to construct an estimator using only classical post-processing of these measurements.

 \section{Results}\label{Sec:Results}
\begin{samepage}
  \begin{theorem}[qFLO Resource Scaling]
  \label{Theorem:Main_Theorem}
     Suppose we wish to obtain the expectation value of an observable $\obsv$ after evolving an initial state $\rho_0$ for time $T$ under a Hamiltonian $H=\sum_{i=1}^L h_j H_j$ with $\norm{H_j}=1$.
     By varying the time-step size of the qDRIFT mapping and taking measurements at $m$ different values of the time-step, with high probability it is possible to construct an order-$m$ Richardson estimator $\hat{\obsv}$ satisfying
     \begin{align*}
         \left|\hat{\obsv} - \tr\left[ \rho_0 e^{iHT}\obsv e^{-iHT} \right] \right|\leq \epsilon \norm{A} 
     \end{align*}
     using $m=\tilde{O}(\log(1/\epsilon))$ sampling points and circuit depths of no more than 
     \begin{align*}
         O\left((\lambda T )^2\log\left(\frac{1}{\epsilon} \right)\right) 
     \end{align*}
     where $\lambda = \sum_j h_j$. 
     Each sample point used to construct the estimator requires $O(1/\epsilon^2)$ runs of the qDRIFT mapping.
     As a result the total gate count scales as $O\left(\frac{1}{\epsilon^2}(\lambda T )^2\log^2\left(\frac{1}{\epsilon} \right)\right) $.
 \end{theorem}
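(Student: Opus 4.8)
The plan is to assemble the four technical components developed in \cref{Sec:Series_Expansion,Sec:Richarson_Extrapolation,Sec:Richardson_Bounds,Sec:Robust_Richardson}. First I would establish the exponential representation of the iterated qDRIFT channel. Since $\cE$ is trace preserving and $\dnorm{\cE-\1}=O(\lambda t)$ (the leading term being $-it[H,\cdot]$), its principal logarithm is a convergent series whenever $2\lambda t<1$, so we may write $\cE=e^{-it\cG(t)}$ with $\cG(t)$ analytic on this disk and $\cG(0)=\ad_H=[H,\cdot]$. Iterating $N$ times with $t=T/N$ and setting $s=1/N$ gives $\cE^{1/s}=e^{-iT\cG(sT)}$; expanding $\cG(sT)$ as a power series in $s$ and then the outer exponential yields, for each truncation order $K$,
\begin{align}
    \cE^{1/s}(\rho)=e^{-iHT}\rho e^{iHT}+\sum_{j=1}^{K-1}s^{j}\,\tilde{E}_{j+1,K}(\rho)+F_K(s,T)(\rho),\qquad \dnorm{F_K(s,T)}=O(s^K),
\end{align}
and pairing against $\obsv$ gives the scalar series $f_A(s):=\tr[\obsv\,\cE^{1/s}(\rho)]=\tr[\obsv e^{-iHT}\rho e^{iHT}]+\sum_{j\ge1}a_j s^j+h_K(s,T)$. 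The quantitative heart here is to track the $\lambda T$ dependence: because each qDRIFT iterate has local error second order in the step size, every coefficient obeys $|a_j|\le C^{j}(\lambda T)^{2j}\norm{A}$ up to a factor polynomial in $j$ (absorbed later), and the same bound controls $h_K$ in the regime $s(\lambda T)^2=O(1)$.

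Next I would apply Richardson extrapolation. Choosing nodes $k_1<\dots<k_m$ and weights $(b_j)$ with $\sum_j b_j=1$ and $\sum_j b_j k_j^{-\ell}=0$ for $\ell=1,\dots,m-1$, the estimator $\estimator_m(s)=\sum_{j=1}^m b_j f_A(s/k_j)$ annihilates the first $m-1$ terms of the series, and substituting the coefficient bounds gives
\begin{align}
    \bigl|\estimator_m(s)-\tr[\obsv e^{-iHT}\rho e^{iHT}]\bigr|=O\!\left(\kappa_m\,s^{m}(\lambda T)^{2m}\norm{A}\right),
\end{align}
with $\kappa_m:=\sum_j|b_j|$. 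The nodes must be chosen so that (i) $\max_j k_j=\tilde{O}(m)$, so that with $1/s=\Theta((\lambda T)^2)$ (large constant) the largest step count $N_i=\max_j k_j/s$ stays $\tilde{O}((\lambda T)^2 m)$, and (ii) $\kappa_m$ grows only polynomially in $m$; achieving (ii) in general calls for a mildly oversampled or stabilised extrapolation, which is why one ends up with $m=\tilde{O}(\log(1/\epsilon))$ rather than exactly $\log(1/\epsilon)$. Then $s^m(\lambda T)^{2m}$ decays geometrically in $m$, so $m=\tilde{O}(\log(1/\epsilon))$ forces the systematic error below $\tfrac{\epsilon}{2}\norm{A}$, while the maximum circuit depth is $\max_i N_i=O((\lambda T)^2\log(1/\epsilon))$, and one checks $N_i\ge 2\lambda T$ throughout so the exponential expansion is legitimate.

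Finally, the robustness analysis. Each $f_A(s/k_j)$ is estimated by preparing $\rho_0$, running the qDRIFT channel at step size $t_j=sT/k_j$, and averaging measurements of $\obsv$; a Hoeffding/Chebyshev bound shows an additive error $\epsilon'\norm{A}$ at node $j$ costs $O(\epsilon'^{-2})$ runs. Since $\estimator_m$ is linear in the $f_A(s/k_j)$ with $\ell_1$ weight $\kappa_m$, the propagated statistical error is at most $\kappa_m\epsilon'\norm{A}$; taking $\epsilon'=\epsilon/(2\kappa_m)$ makes this $\tfrac{\epsilon}{2}\norm{A}$, and since $\kappa_m$ is polynomial in $m$, hence $\polylog(1/\epsilon)$, this needs only $O(1/\epsilon^2)$ runs per node up to polylog factors. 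A union bound over the $m$ nodes (with a median-of-means boost) gives the high-probability guarantee, and combining the $\tfrac{\epsilon}{2}$ systematic and $\tfrac{\epsilon}{2}$ statistical errors by the triangle inequality gives accuracy $\epsilon\norm{A}$. Multiplying the number of nodes $m=\tilde{O}(\log(1/\epsilon))$, the runs per node $O(1/\epsilon^2)$, and the per-run depth $O((\lambda T)^2\log(1/\epsilon))$ yields the total gate count $O\!\left(\tfrac{1}{\epsilon^2}(\lambda T)^2\log^2(1/\epsilon)\right)$.

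The main obstacle is the tension in the node choice: cancelling more orders sharpens the truncation error and keeps $m$ logarithmic, but naive harmonic or geometric spacing makes $\kappa_m$ exponential in $m$, which would blow the per-node sample cost up to a polynomial in $1/\epsilon$ and destroy the advertised scaling. Exhibiting a node set and weights that simultaneously cancel $\Theta(\log(1/\epsilon))$ orders, keep $\max_j k_j=\tilde{O}(\log(1/\epsilon))$ (so the depth bound holds), and keep $\kappa_m$ only polylogarithmic in $1/\epsilon$ — together with the companion estimate $|a_j|\le C^j(\lambda T)^{2j}\norm{A}$ for the series coefficients — is where the real work lies; the remainder is bookkeeping with the triangle inequality, submultiplicativity of the diamond norm, and concentration inequalities.
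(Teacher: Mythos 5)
Your proposal follows essentially the same route as the paper: represent the channel as $\cE_s = e^{-isT\cG(s)}$ via the matrix logarithm (valid once $2\lambda sT<1$), expand the iterated channel as a power series in the inverse step size $s$ with coefficient bounds that effectively scale as $((\lambda T)^2 s)^m$ after cancellation, remove the first $m-1$ orders by Richardson extrapolation, and split the error budget between the systematic part and an $O(1/\epsilon^2)$-shot statistical part propagated through the $\ell_1$ weight of the extrapolation coefficients. The one ingredient you explicitly leave open --- a node set that simultaneously keeps $\norm{b}_1=\sum_j\abs{b_j}$ tame and the node ratios bounded --- is exactly what the paper imports from the well-conditioned construction of Ref.~\cite{low2019well}: Chebyshev-spaced reciprocal nodes $y_j=\lceil R/\sin(\pi(2j-1)/8m)\rceil^{2}$ give $\norm{b}_1=O(\log m)$ and $\max_j y_j/y_m=O(m^2)$, so no oversampling or stabilisation is needed and $m$ can in fact be taken as $O(\log(1/\epsilon)/\log\log(1/\epsilon))$ rather than incurring an overhead.
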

 \end{samepage}
 \noindent We note that our protocol only requires us to implement the qDRIFT channel multiple times using different time-step sizes and then take  measurements on the output states.
 Notably, it does not require any auxiliary qubits, additional controlled gates, or quantum operations beyond implementing the exponentials $e^{-iH_j t}$.
 The estimator can then be constructed by classical post-processing of the obtained measurement values.
 The protocol is thus highly parallelizable and requires minimal quantum resources.
 Additionally, the qFLO procedure is robust against noise in the measurements used to construct the estimator, although the overall error is limited by the measurement precision.

The qFLO algorithm inherits the advantages of the qDRIFT algorithm, as shown by the circuit depth comparison given for different simulation protocols in  \cref{Table:Circuit_Depth_Comparison}.
 In particular, the number of steps which are sufficient to reach a desired error is independent of the number of terms in the Hamiltonian decomposition $L$, where $H=\sum_{j=1}^L h_jH_j$.
 This is of particular interest for quantum chemistry applications, where Hamiltonians acting on $n$ fermionic modes may have $L=O(n^4)$ terms and hence introduces a factor of $L^2=O(n^8)$ which can be large.
 Even for modest sized quantum chemistry systems we typically have $n\sim 100$ leading to prohibitively high simulation costs \cite{poulin2014trotter}.
 We also realise that because the qFLO scheme involves extrapolating through multiple points using the qDRIFT channel, this implies that for a given maximum circuit depth, there is always a qFLO routine that performs at least as well as the qDRIFT routine.

Unfortunately, FLO also inherits the $O(\lambda^2T^2)$ scaling in the time-parameter from the qDRIFT protocol, meaning that higher-order Trotter-Suzuki formula will asymptotically out-compete it.
Furthermore, it does not obviously obtain the commutator scaling that is observed in Trotter-Suzuki formulae, and hence qFLO will likely be more competitive for Hamiltonians which have a less commuting structure such as those seen in molecular Hamiltonians.
 Finally we realise that a key limitation of qFLO relative to qDRIFT is that it only works for estimating observable expectation values to high precision, but does not prepare the time-evolved state to high precision.

\bgroup
\def\arraystretch{1.6} 
\begin{table}[h!]
    \centering
    \begin{tabular}{c|c}
    \multicolumn{2}{c}{\textbf{Circuit Depths for Different Simulation Methods}} \\ \hline \textbf{Method} & \textbf{Maximum Circuit Depth} \\
     \hline  \hline  \textbf{$1^{st}$-Order TS} \cite{childs2021theory} &  $L^3 (\alpha^{(1)} T)^2/\epsilon$ \\ 
         \textbf{$p^{th}$-Order TS} \cite{childs2021theory} & $L^{2+1/p} (\alpha^{(p)} T)^{1+1/p}/\epsilon^{1/p}$ \\
       \hline  \textbf{$p^{th}$-Order TS Random} \cite{childs2019faster} & $L^2 (\Lambda T)^{1+1/p}/\epsilon^{1/p}$ \\
    \hline  \textbf{  Extrap. $p^{th}$-Order TS } \cite{watson2024exponentially} & $ L^{2+1/p} (\beta^{(p)} T)^{1+1/p}\polylog(1/\epsilon)$ \\
    \hline   \textbf{qDRIFT} \cite{campbell2019random} &  $ (\lambda T)^2/\epsilon$ \\
      \textbf{  qFLO } [\textcolor{blue}{This work}] &$ (\lambda T)^2\log(1/\epsilon)$ \\
    \hline
    \end{tabular}
    \caption{A list of the asymptotic scaling of sufficient circuit depths to reach an error $\epsilon$ for measurement of an observable after time $T$ for a Hamiltonian $H = \sum_{i=1}^L h_jH_j$, $\lambda = \sum_j h_j,$ $\Lambda =\max_j  h_j$.
    ``TS'' denotes Trotter-Suzuki product formulae.
    Here $\alpha^{(p)}$, $\beta^{(p)}$ are a sets of $p$-times nested commutators given in Refs.~\cite{childs2021theory} and \cite{watson2024exponentially} respectively, where we have normalised them to remove the $L$ dependence.
    We see that if $\lambda =O(L^{1-\delta}\Lambda )$ for any $\delta>0$, then qFLO and qDRIFT can achieve better asymptotic scaling in the number of terms in the Hamiltonian.}
    \label{Table:Circuit_Depth_Comparison}
\end{table}
\egroup

\begin{figure}[!ht] \framebox{
		\begin{minipage}{.45\textwidth}
			\raggedright
   {\bf qDRIFT Protocol.} \\ \hrulefill \\ 
   {\bf Input:} A list of Hamiltonian terms $H=\sum_j h_j H_j $, a simulation time $T$, a classical oracle function SAMPLE() that returns an value $j$ from the probability distribution $p_j = h_j / (\sum_{j} h_j)$ a time-step size $t$, initial state $\rho_0$, and an oracle MEASURE$_A()$ which returns a measurement of $A$ on a state.  \\
			{\bf Output:} A number representing a quantum measurement.
		 \\ \hrulefill \\ 
   \begin{enumerate}
				\item $\lambda \leftarrow \sum_j h_j$ 
				\item $N\leftarrow \lceil T/t \rceil$
				\item $i \leftarrow 0$
                \item $\rho \leftarrow \rho_0$
				\item While $i < N$
				\begin{enumerate}
					\item $i \leftarrow i+1$
					\item $j \leftarrow$ SAMPLE() 
                    \item $\rho \leftarrow e^{i \lambda T H_j / N }\rho e^{-i \lambda t H_j / N }$
				\end{enumerate}	
                \item Return MEASURE$_A(\rho)$.
			\end{enumerate} 
	\end{minipage}  } \quad  \quad \ \ 
 \framebox{
		\begin{minipage}{.45\textwidth}
			\raggedright
            {\bf Richardson Estimator.} \\ \hrulefill \\
			{\bf Input:} All the inputs to the qDRIFT protocol, and a list of time-steps $T_\text{list}$.  \\
			{\bf Output:} An estimate to the time-evolved observable. \\ \hrulefill \\
   \begin{enumerate}
   \item $k\leftarrow 0$
   \item $N_\text{samples} \leftarrow \left\lceil \frac{\norm{\obsv}^2}{\epsilon^2} \log\left(\frac{2|T_{\text{list}}|}{\delta}\right)\right\rceil$
   \item for $t$ in $T_{list}$
   \begin{enumerate}
       \item $i \leftarrow 0$
       \item SUM $\leftarrow 0$
       
       \item While $i< N_\text{samples}$
       \begin{enumerate}
           \item $i \leftarrow i+1$
           \item SUM $\leftarrow$ SUM $+$ qDRIFT$(t)$
       \end{enumerate}
       \item $a_k \leftarrow \frac{1}{N_\text{samples}}$SUM
       \item $b_k \leftarrow \prod_{\substack{t_j\in T_{list} \\ t_k \neq t_j }} \frac{1}{1 - (t_k/t_j)}$
       \item $k\leftarrow k+1$
       \end{enumerate}
       \item ESTIMATOR $\leftarrow \sum_{j=1}^{|T_{list}|} b_j a_j$
       \item Return ESTIMATOR
   \end{enumerate}
	\end{minipage}  }
    \caption{\textit{Left:} Pseudocode for the qDRIFT protocol and measuring the final state. \textit{Right:} Pseudocode for computing the Richardson extrapolator with access to the qDRIFT protocol as a subroutine.}
    \label{Fig:qDRIFT}
    \end{figure}
    
    \begin{figure}
 \framebox{
 \begin{minipage}{1.0\textwidth}
			\raggedright
   {\bf qFLO Estimator.} \\ \hrulefill \\
 {\bf Input:} A list of Hamiltonian terms $H=\sum_j h_j H_j $, a simulation time $T$, initial state $\rho_0$, an observable $\obsv$, and a precision $\epsilon$, and a probability of success $\delta$.  \\
			{\bf Output:} An estimate to the time-evolved local observable. \\ \hrulefill \\
   \begin{enumerate}
				\item $\lambda \leftarrow \sum_j h_j$ 
				\item $m\leftarrow \lceil\log(1/\epsilon)  \rceil$  
                \item $R \leftarrow \sqrt{8}m/\pi$
				\item $i \leftarrow 0$
				\item While $i < m$
				\begin{enumerate}
					\item $t_i \leftarrow  \left(\frac{\log(m)}{\epsilon}\right)^{1/m}   \frac{1}{\lceil 4(8\lambda T)^2\rceil}\left\lceil \frac{R}{\sin\left(\frac{\pi(2m-1)}{8m}\right)} \right\rceil^2 \left\lceil  \frac{R}{\sin\left(\frac{\pi(2i-1)}{8m}\right)} \right\rceil^{-2}   $ 
                    \item $T_\text{list}$  \textbf{.}  append$(  t_i )$
                    \item $i \leftarrow i+1$
				\end{enumerate}	
                \item Return RICHARDSON$(\rho_0,T, \epsilon, H, \obsv)$.
			\end{enumerate} 
	\end{minipage}  }
 	\caption{  The overall qFLO procedure utilising the subroutines from \cref{Fig:qDRIFT}.}
	\label{Fig:qFLO}
 \end{figure}

 \subsection{Related Work}
 The qDRIFT algorithm has been extended in multiple ways \cite{ouyang2020compilation,hagan2023composite, kiss2023importance, cao2024marqsim, chakraborty2024implementing}.
 Here we briefly outline some of the additional related literature on quantum simulation using randomisation and post-processing methods.
 
 \paragraph{qSWIFT.} The qFLO protocol presented here should be compared to the recent qSWIFT algorithm for higher order error scaling of Trotter formulae with randomised compilation by \citeauthor{nakaji2023qswift} \cite{nakaji2023qswift}.
qSWIFT works by explicitly writing the qDRIFT channel as a function of the time-step, and then explicitly computing the higher-order terms using a quantum circuit with an ancillary qubit plus classical post-processing. 
 This requires the implementation of singly-controlled time-evolution operators of the Hamiltonian terms. 
 The $K^{th}$-order qSWIFT estimator achieves requires using circuit depths $O\left((\lambda T)^2/\epsilon^{1/K}\right)$ to measure time-evolved observables to relative error $\epsilon$.
 The post-processing is necessary as the higher order terms are not necessarily a CPTP map, and therefore cannot generally be physically implemented.
 We contrast this with the Richardson extrapolation approach of qFLO presented here which implicitly cancels-off the higher order terms rather than explicitly computing them.
 Both the qSWIFT and qFLO approaches suffer from the fact that they do not actually prepare the time-evolved state, but rather recover higher-precision measurements using post-processing.

\paragraph{Randomised Product Formulae.} Before qDRIFT was developed, randomised product formulae had been studied, where here the randomisation is done over how the summands are ordered.
Such product formulae have been shown to perform asymptotically better than standard scaling \cite{childs2019faster, cho2024doubling}.
We note that these algorithms are qualitatively different form the qDRIFT approach in that they randomly implement stages of product formulae, and each time-step requires implementing a number of terms proportional to the number in the Hamiltonian.

Recent work has also considered implementing terms probabilistically in such a way that the average gate count is independent of the precision $\epsilon$, although the maximum circuit depths still remain a function of precision \cite{granet2024hamiltonian}.

 \paragraph{Extrapolated Product Formulae. }
 Multiple previous works have used the idea of treating the time-step size a parameter which can be extrapolated over in a similar manner to this work.
 This was first done by \citeauthor{endo2019} \cite{endo2019} where Richardson extrapolation was used to empirically mitigate noise and algorithmic error.
 This was again used by \citeauthor{vazquez2023well}
 \cite{vazquez2023well} where Richardson extrapolation was again used to improve the measurement of observables.
 
 Separately \citeauthor{rendon2024improved} \cite{rendon2024improved} demonstrated that polynomial interpolation could be used to improve the error scaling in staged product formulae to $O(\log(1/\epsilon))$. 
 \citeauthor{watson2024exponentially} \cite{watson2024exponentially} and \citeauthor{rendon2024towards} \cite{rendon2024towards}  showed $p^{th}$-order staged product formulae retain the expected scaling $O(T^{1+1/p})$ time scaling when using polynomial interpolation, and the former further shows that commutator scaling still applies.

 Beyond the Richardson and polynomial extrapolation techniques above, other novel implementations have been developed such as parametric matrix models by \citeauthor{cook2024parametric} \cite{cook2024parametric}.
 However, as far as the author is aware, rigorous resource estimates and asymptotic scaling have not been given for this method.

 \paragraph{Quantum Simulation using Non-Unitary Channels.}
 Previous work has utilised non-unitary channels to implement approximations to quantum dynamics on a quantum computer.
 \citeauthor{faehrmann2022randomizing} \cite{faehrmann2022randomizing} uses a ``derandomised'' version of the well-conditioned linear combination of unitaries method presented in \citeauthor{low2019well} \cite{low2019well}. 
 A similar idea is used in the appendix of Ref.~\cite{nakaji2023qswift} where it is also noted that the randomised approach to phase estimation of \citeauthor{wan2022randomized} \cite{wan2022randomized} can be extended to time simulation of observables.
 Again, this result again requires ancillary qubits to run the simulation algorithm (and likely requires more gates than the qSWIFT protocol).
 \citeauthor{wang2024faster} \cite{wang2024faster} demonstrate improved performance by mixing algorithms which express the time-evolution using different series expansions.
  \citeauthor{martyn2024halving} \cite{martyn2024halving} uses a randomised compiling applied to quantum signal processing to improve the asymptotic scaling of quantum simulation, which improves the overall performance by a factor of a half.
 \citeauthor{zhuk2023trotter}
 \cite{zhuk2023trotter} show that post-processing can be used to give quadratically better scaling when using $p^{th}$-order Trotter-Suzuki formulae.
 \citeauthor{gong2023improved} \cite{gong2023improved} uses random orderings of Trotter-Suzuki operations plus post-processing to improve performance of simulation algorithms.

\section{Series Expansion for the Repeated qDRIFT Channel}
\label{Sec:Series_Expansion}

\subsection{Notation}

Let $X$ be a matrix acting on a finite dimensional Hilbert space $\mathcal{H}$.
We will use $\norm{X}_p$ to denote the Schatten $p$-norm of an operator, and $\norm{X}$ for the special case of the operator norm/Schatten $\infty$-norm.
For two matrices $X,Y$, we use $\ad_Y(X) =[Y,X] = YX-XY$. 
Let $\mathcal{T}:X\mapsto \mathcal{T}(X)$ be a linear map on the space of matrices of $\mathcal{H}$.
For a mapping $\cT$, we define the diamond distance as 
\begin{align}
    \dnorm{\cT}\coloneqq \sup_{\substack{\rho\in \mathcal{H}\otimes\mathcal{H} \\ \norm{\rho}_1\leq 1 } }\norm{(\cT \otimes \cI_{\mathcal{H}})(\rho)}_1.
\end{align}
We use $\cI_\mathcal{H}$ to denote the identity operator on a Hilbert space $\mathcal{H}$, and will drop the subscript when the dimension is clear from context.

Given a local Hamiltonian we will decompose it as $H=\sum_j h_j H_j$, where we chose the decomposition such that $h_j>0$ and $\norm{H_j}=1$ as per Ref.~\cite{campbell2019random}.
We denote $\lambda= \sum_j h_j$ and $\Lambda = \max_j h_j$.
$\lambda$ upper bounds the largest eigenvalue of $H$.
We assume that the exponentials $e^{-iH_jt}$ can be implemented without error for arbitrary times $t\in \mathbb{R}.$

\subsection{The Series Expansion}

If we wish to implement the time-evolution of a Hamiltonian $H=\sum_j h_j H_j$ using $N$ implementations of the qDRIFT channel, we set a timescale $t = T/N$, and define 
\begin{align}
    \cE(\rho) =  \sum_j p_j e^{-it \lambda\ad_{H_j} }(\rho)
\end{align}
where $\ad_B(A) = [B,A]$, \  $p_j = h_j/\lambda$.
We will find it convenient to work with the parameter $s\coloneqq 1/N$, and thus rewrite this as a parameterised set of channels
\begin{align}
    \cE_s(\rho) \coloneqq \sum_j p_j e^{-isT\lambda\ad_{H_j} }(\rho).
\end{align}
For mathematical convenience, we now want to represent $\cE_s$ in terms of a matrix exponential, which can be done provided the logarithm of $\cE_s$ exists. 
In \cref{Sec:Existence_of_Logarithm} we prove that it does provided $2sT < 1/\lambda $.
Assuming we are working in this regime, we can write:
\begin{align}
    \cE_s = e^{-G'} \quad \quad G' = -\log(\cE_s),
\end{align}
where $G'$ is defined by the matrix logarithm 
\begin{align}
    G' &=  - \sum_{k=1}^\infty \frac{(-1)^{k+1}}{k}(\cE_s-\cI)^k \\
       &= (\cI-\cE_s) + \sum_{k=2}^\infty \frac{1}{k}(\cI-\cE_s)^k. \label{Eq:Log_Series}
\end{align}
We prove this logarithm exists in \cref{Sec:Existence_of_Logarithm}.
We now consider the Taylor series of $\cE_s$:
\begin{align}
    \cE_s = \cI  - isT\ad_{H} + \delta(s)
\end{align}
where $\delta(s)$ has a lowest order of $O((sT)^2)$.
This can be rewritten as
\begin{align}
    \cI-\cE_s = isT\ad_{H} +\delta(s).
\end{align}
We then realise that if we substitute this expansion into the expression for $G'$, then the first term in the expansion of $G'$ is $O(sT)$.
Thus we can rewrite $G'$ as
\begin{align}
    \cG(s) \coloneqq \frac{1}{isT} G' \quad  \text{and hence}  \quad \cE_s = e^{-isT\cG(s)},
\end{align}
where $\cG(s)$ can be thought of as the generator for the mapping.
We will also find it useful to write $G'$ and $\cG$ in terms of a series expansion
\begin{align}
    G &= i(sT\ad_H + \sum_{j=2}^\infty E_j (sT)^j) \\
    \implies \cG(s) &= \ad_H + \sum_{j=1}^\infty E_{j+1} (sT)^j, \label{Eq:cG_Power_Series}
\end{align}
where $\{E_j\}_{j=1}^\infty$ are super-operator coefficients defined from \cref{Eq:Log_Series}. 

\subsection{Repeated Applications of the Mapping}
We can write the channel in terms of the series expansion for $\cG(s)$ as
\begin{align}
    \cE_{s} = e^{-i(sT\ad_H + \sum_{j=2}^\infty E_j (sT)^j)}.
\end{align}
We now consider $1/s$-many repeated applications of the mapping.
\begin{align}
    \cE_{s}^{1/s} &= e^{-iT\cG(s)} \\
    &= e^{-iT(\ad_H + \sum_{j=1}^\infty E_{j+1} (sT)^j)} \\
    &= e^{-iT(\ad_H+\Delta(s))},
\end{align}
where we can write
\begin{align}
    \cG(s) &= \ad_H+ \Delta(s) \\
    &= \ad_H + \sum_{j=1}^\infty E_{j+1}s^jT^{j}, \label{Eq:E_j_Definition}
\end{align}
where $\Delta(s) = \sum_{j=1}E_{j+1}s^jT^j$
and where we have absorbed a factor of $\lambda^j$ into $E_j$.
We show later in \cref{Lemma:Ek_Bound} that $\dnorm{E_k}\leq (4\lambda )^k$.
We now want to find an expansion of $e^{T\cG(s)}$ in terms of the inverse step-size $s$. 
To do this, we realise that if we treat $s$ as a constant (i.e. we fix a Trotter-step size), and define $\tilde{\state}(s,T) = e^{-iT\cG(s)}\state$ to be the state time-evolved under the qDRIFT mapping, then the evolution of the state satisfies the following differential equation
\begin{align}
    \partial_T\tilde{\state}(s,T) = -i\cG(s)\tilde{\state}(s,T).
\end{align}
\begin{lemma}[Variation-of-Parameters Formula]
\label{Lemma:Variation_of_Parameters}
    Given an ODE of the form $\partial_x O  = (A+B)O,$ we can write the solution as
    \begin{align}
        e^{A+B}O = e^AO + \int_0^1 dx \ e^{(1-x)A} B e^{x(A+B)}O.
    \end{align}
\end{lemma}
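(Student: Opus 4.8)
The plan is to prove the identity by the standard Duhamel interpolation trick: exhibit an explicit antiderivative and apply the fundamental theorem of calculus. Define the (vector-valued, acting on $O$) function $f(x) \coloneqq e^{(1-x)A}\, e^{x(A+B)}\, O$ for $x\in[0,1]$. Then $f(0) = e^A O$ and $f(1) = e^{A+B} O$, so it is enough to show that $f(1) - f(0)$ equals the claimed integral.

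First I would differentiate $f$. Working in finite dimensions, $x \mapsto e^{xM}$ is smooth for any matrix $M$, with $\frac{d}{dx} e^{(1-x)A} = -A\, e^{(1-x)A} = -e^{(1-x)A} A$ (using that $A$ commutes with every power series in $A$) and $\frac{d}{dx} e^{x(A+B)} = (A+B)\, e^{x(A+B)}$. The product rule gives
\[
    f'(x) = -e^{(1-x)A} A\, e^{x(A+B)} O + e^{(1-x)A}(A+B)\, e^{x(A+B)} O = e^{(1-x)A}\, B\, e^{x(A+B)} O,
\]
where the two $A$-contributions cancel precisely because $e^{(1-x)A}$ and $A$ commute.

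Next I would integrate. The integrand $x \mapsto e^{(1-x)A} B\, e^{x(A+B)} O$ is continuous on $[0,1]$, so the fundamental theorem of calculus applies without any subtlety, yielding $f(1) - f(0) = \int_0^1 f'(x)\, dx = \int_0^1 e^{(1-x)A}\, B\, e^{x(A+B)} O\, dx$, which rearranges to $e^{A+B}O = e^A O + \int_0^1 dx\ e^{(1-x)A} B\, e^{x(A+B)} O$, as stated.

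The ``hard part'' here is purely bookkeeping rather than conceptual. One must be careful that $A$ commutes with $e^{(1-x)A}$ (true since both are limits of polynomials in $A$), as this is exactly what makes the $A$-terms in $f'$ cancel; and one must apply the ODE convention $\partial_x O = (A+B)O$ on the correct (left) side so that the interpolation produces $e^{x(A+B)}$. In the paper's application $A$ and $B$ will be superoperators such as $-iT\ad_H$ and $-iT\Delta(s)$ acting on the space of operators on $\mathcal{H}$, and $O$ a state $\state$; the argument above is basis-independent and carries over verbatim at the superoperator level. Since everything is finite-dimensional, smoothness of $x \mapsto e^{xM}$, differentiation under the integral sign, and any interchange of limits are all automatic, so no further justification is needed.
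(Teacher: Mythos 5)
Your proof is correct and is the standard Duhamel/fundamental-theorem-of-calculus argument; the paper itself does not prove this lemma but simply cites Lemma 6 of an external reference, which uses essentially the same interpolation $f(x)=e^{(1-x)A}e^{x(A+B)}O$. Your self-contained derivation, including the observation that the $A$-terms cancel because $A$ commutes with $e^{(1-x)A}$ and that everything is automatic in finite dimensions (which covers the superoperator application with $A=-iT\ad_H$, $B=-iT\Delta(s)$), is complete and needs no changes.
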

\noindent We refer to \cite[Lemma 6]{aftab2024multi} for an explicit proof of the lemma in this particular form.
This can then be used to give an expansion of observables in terms the inverse step size, $s$, in the following way.
\begin{lemma} \label{Lemma:Series_Expansion}
   Defining $\tstate(s,T) = e^{-iT\cG(s)}\state,$ to be the approximate time-evolved observable and $\state(T) = e^{-iT\ad_H}\state$ to be the Hamiltonian evolution.
   Then we can write this as a series expansion
   \begin{align}
       \tilde{\state}(s,T) - \state(T) = \sum_{j\geq2} s^j \tilde{E}_{j+1,K}(T)(\state) + \tilde{F}_K(T,s)(\state),
   \end{align}
   for any $K>3$.
   We have the bounds
   \begin{align}
        \dnorm{\tilde{E}_{j+1,K}(T)} &\leq (8\lambda T)^j \sum_{l=1}^{\min\{K-1, j\}} \frac{(8\lambda T)^l}{l!} \\
       \dnorm{\tilde{F}_K(T,s)}&\leq  \frac{(8\lambda T)^K}{K!}\sum_{j=1}^\infty(8\lambda sT)^{j}. 
    \end{align}
\end{lemma}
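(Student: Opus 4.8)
The plan is to expand $e^{-iT\cG(s)}\state$ around the exact evolution $e^{-iT\ad_H}\state=\state(T)$ as a Dyson-type series in the perturbation $\Delta(s)=\cG(s)-\ad_H=\sum_{k\ge1}E_{k+1}s^kT^k$, then re-organise this into a power series in the inverse step-size $s$ and estimate each piece in diamond norm. Applying \cref{Lemma:Variation_of_Parameters} with $A=-iT\ad_H$ and $B=-iT\Delta(s)$ once gives $\tstate(s,T)-\state(T)=-iT\int_0^1 e^{-i(1-x)T\ad_H}\,\Delta(s)\,e^{-ixT\cG(s)}\state\,dx$; iterating it produces the full Dyson series $\tstate(s,T)-\state(T)=\sum_{n\ge1}\mathcal D_n(s,T)\state$, where $\mathcal D_n$ is an $n$-fold integral over the simplex $\{1\ge x_1\ge\cdots\ge x_n\ge0\}$ of $(-iT)^n$ times an alternating product of free propagators $e^{-i(\cdot)T\ad_H}$ and $n$ insertions of $\Delta(s)$. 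This series converges absolutely in the working regime since $\sum_n\dnorm{\mathcal D_n}\le\sum_n (T\dnorm{\Delta(s)})^n/n!<\infty$.

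Next I would substitute $\Delta(s)=\sum_k E_{k+1}s^kT^k$ into each $\mathcal D_n$ and, by absolute convergence, interchange the $k$-sums with the integrals, so that $\mathcal D_n$ becomes $\sum_{k_1,\dots,k_n\ge1}s^{k_1+\cdots+k_n}(\cdots)$. Collecting all contributions of total order $s^j$ coming from Dyson orders $n=1,\dots,\min\{j,K-1\}$ (only these contribute, since $n$ insertions carry at least $s^n$) defines the superoperator $\tilde E_{j+1,K}(T)$ — the truncation level $K-1$ is exactly where the subscript $K$ enters — and the remaining Dyson tail $\sum_{n\ge K}\mathcal D_n$ is $\tilde F_K(T,s)$; since each of its terms carries at least $s^K$ we get $\dnorm{\tilde F_K}=O(s^K)$, and by construction $\sum_j s^j\tilde E_{j+1,K}+\tilde F_K$ reassembles the full Dyson series.

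For the bounds I use three inputs: each free propagator $e^{-it\ad_H}$ is a unitary channel, so $\dnorm{e^{-it\ad_H}}=1$ (this is what keeps the estimates polynomial in $\lambda T$); the $n$-simplex has volume $1/n!$; and $\dnorm{E_k}\le(4\lambda)^k$ by \cref{Lemma:Ek_Bound}. For $\tilde E_{j+1,K}(T)$, a single Dyson order $n$ with a composition $(k_1,\dots,k_n)$ of $j$ contributes at most $T^n\cdot T^{\sum k_i}\cdot(4\lambda)^{n+\sum k_i}/n!=(4\lambda T)^{n+j}/n!$; since there are $\binom{j-1}{n-1}\le 2^{j-1}$ compositions of $j$ into $n$ positive parts, summing over compositions and over $n$ and bounding $4\lambda T\le 8\lambda T$ gives $\dnorm{\tilde E_{j+1,K}(T)}\le (8\lambda T)^j\sum_{l=1}^{\min\{K-1,j\}}(8\lambda T)^l/l!$. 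For $\tilde F_K(T,s)=\sum_{n\ge K}\mathcal D_n$, submultiplicativity gives $\dnorm{\mathcal D_n}\le (T\dnorm{\Delta(s)})^n/n!$ with $T\dnorm{\Delta(s)}\le 4\lambda T\sum_{k\ge1}(4\lambda sT)^k$; using $\sum_{n\ge K}y^n/n!\le (y^K/K!)\cdot O(1)$, factoring $(4\lambda T)^K$ out of $(4\lambda T\sum_k(4\lambda sT)^k)^K$, and absorbing the remaining geometric and exponential constants into the $4\to8$ slack under the standing smallness assumption $8\lambda sT<1$ (also needed for the stated geometric series to converge) yields $\dnorm{\tilde F_K(T,s)}\le \frac{(8\lambda T)^K}{K!}\sum_{j\ge1}(8\lambda sT)^j$.

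The step I expect to be the main obstacle is the combinatorial bookkeeping of the nested expansions — over the Dyson order $n$ and over the power of $s$ carried by each $\Delta(s)$ factor — so that every $s^j$-coefficient is a finite, well-defined superoperator, the remainder is honestly $O(s^K)$, and the various combinatorial factors (the composition count $\binom{j-1}{n-1}$, the simplex volumes $1/n!$, and the tail estimate $\sum_{n\ge K}y^n/n!$) collapse cleanly into the claimed closed forms with a single universal constant upgrade $4\lambda\to8\lambda$ absorbing all slack. Some care is also needed to justify exchanging the $k$-sums with the simplex integrals and to pin down the regime of $s$ in which all the geometric series converge.
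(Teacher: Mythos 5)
Your proposal is correct and follows essentially the same route as the paper: iterate the variation-of-parameters formula with $A=-iT\ad_H$, $B=-iT\Delta(s)$, substitute the power series for $\Delta(s)$, regroup by powers of $s$, and bound using unitarity of the free propagators, the $1/l!$ simplex volume, $\dnorm{E_k}\le(4\lambda)^k$, and the $2^{j+l}$ composition count. The only (cosmetic) difference is that you take $\tilde F_K$ to be the tail of the full infinite Dyson series, whereas the paper truncates the iteration at depth $K$ and keeps an exact remainder containing $\tilde\state(s,x_KT)$, which sidesteps the absolute-convergence justification you flag.
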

\begin{proof}

We can apply the variation-of-parameters formula from \cref{Lemma:Variation_of_Parameters} to expand the repeated qDRIFT channel as a series in $s$:
\begin{align}
    e^{-iT(\ad_H+\Delta(s))} = e^{-iT\ad_H} + \int_0^1 dx_1 e^{-iT(1-x_1)\ad_H} (-iT)\Delta(s)e^{-iT(\ad_H+\Delta(x_1))},
\end{align}
where we have set $A = -iT\ad_H$ and $B=-iT\Delta(s)$.
We can then iterate this with $K$-many repeated substitutions of the variation-of-parameters formula
\begin{align}
    e^{-iT(\ad_H+\Delta(s))}(\state) &= e^{-iT\ad_H}(\state) \\ 
        &+ \sum_{l=1}^{K-1} (-T)^l \int^1_0 dx_1 \int^{x_1}_0 dx_2 \dots \int^{x_{l-1}}_0 dx_l e^{iT(1-x_1)\ad_H } i\Delta(s) e^{i(x_1-x_2)T\ad_H } i\Delta(s) \dots  e^{i(x_{l-1}-x_l)T\ad_H } i\Delta(s)(\state(x_lT))   \label{Eq:Iterate_K}
         \\
         &+ (-T)^K\int^1_0 dx_1\int^{x_1}_0 dx_2 \dots \int_0^{x_{p-1}} dx_K e^{i(1-x_1)T\ad_H } i\Delta(s) e^{i(x_1-x_2)T\ad_H }i\Delta(s) \dots \\ 
         &\quad \dots  \times    e^{i(x_{K-1}-x_K)T\ad_H } i\Delta(s)(\tilde{\state}(s,x_KT)). \label{Eq:Remainder_Term}
\end{align}
For the following we denote $t=sT$. 
We will consider \cref{Eq:Iterate_K} and \cref{Eq:Remainder_Term} separately.
First examine \cref{Eq:Iterate_K} and expand it using the definition of $E_j$ from \cref{Eq:E_j_Definition} to get  
    \begin{align}
        &\sum_{l=1}^{K-1} (-T)^l \int^1_0 dx_1\int^{x_1}_0 dx_2\dots \int^{x_{l-1}}_0 dx_l \  e^{i(1-x_1)T\ad_H } i\Delta(s) e^{i(x_1-x_2)T\ad_H} i\Delta(s) \dots  e^{ix_{l-1}-x_l)T\ad_H } i\Delta(s)(\state(x_lT)). 
        \nonumber\\ 
        =& \sum_{l=1}^{K-1} (-T)^l \int^1_0 dx_1 \int^{x_1}_0 dx_2  \dots \int^{x_{l-1}}_0 dx_l \left( \prod_{\kappa=l}^1\left( \sum_{j_\kappa\in \mathbb{Z}_+\geq 2 } e^{i(x_{\kappa-1}-x_\kappa)T\ad_H}    iE_{j_\kappa+1} t^{j_\kappa}  \right)   \right)(\state(x_lT)) 
        \\ 
        =&\sum_{l=1}^{K-1} (-T)^l \int^1_0 dx_1 \int^{x_1}_0 dx_2 \dots \int^{x_{l-1}}_0 dx_l \left( \sum_{\substack{j\in \mathbb{Z}_+\geq l}} t^j\sum_{\substack{j_1\dots j_l\in \mathbb{Z}_+\geq 2\\ j_1+\dots+ j_l=j}}\left(\prod_{\kappa=l}^1 e^{i(x_{\kappa-1}-x_\kappa)T\ad_H}    iE_{j_\kappa+1}\right) \right)(\state(x_lT)) \label{Eq:E_Derivation}
    \end{align}
    We can then regroup the terms in terms of $s$:
    \begin{align}
        &\sum_{j\geq 2}(sT)^j\sum_{l=1}^{\min\{K-1, j\}} (-T)^l \int^1_0dx_1\int^{x_1}_0dx_2\dots \int^{x_{l-1}}_0dx_{l-1}\sum_{\substack{j_1,\dots,j_l\in \mathbb{Z}\geq 1 \\ j_1+\dots + j_l=j}}\left(  \prod_{\kappa=l}^1 e^{iT(x_{\kappa-1}-x_\kappa)\ad_H} \ iE_{j_\kappa+1}   \right)(\state(s_lT)) \\
        &= \sum_{j\geq 2} s^j \tilde{E}_{j+1,K}(T)(\state).
    \end{align}
    Here we have defined
    \begin{align}
        \tilde{E}_{j+1,K}(T)(\rho) = \sum_{l=1}^{\min\{K-1, j\}} (-T)^{j+l} \int^1_0dx_1\int^{x_1}_0dx_2\dots \int^{x_{l-1}}_0dx_{l-1}\sum_{\substack{j_1,\dots,j_l\in \mathbb{Z}\geq 1 \\ j_1+\dots + j_l=j}}\left(  \prod_{\kappa=l}^1 e^{iT(x_{\kappa-1}-x_\kappa)\ad_H} \ iE_{j_\kappa+1}   \right)(\state(x_lT))
    \end{align}
    If we now want to put bounds on this quantity, we can apply the triangle inequality and submultiplicativity of the norm to show
    \begin{align}
        \dnorm{\tilde{E}_{j+1,K}(T)}\leq T^j \sum_{l=1}^{\min\{K-1, j\}} T^l   \int^1_0dx_1\int^{x_1}_0dx_2\dots \int^{x_{l-1}}_0dx_{l-1} \sum_{\substack{j_1,\dots,j_l\in \mathbb{Z}\geq 1 \\ j_1+\dots + j_l=j}}\left(  \prod_{\kappa=l}^1  \dnorm{E_{j_\kappa+1}}   \right)
    \end{align}
    Then using that $\dnorm{E_k}\leq (4\lambda)^k$ from \cref{Lemma:Ek_Bound} (proven in the appendix), we can rewrite this as 
    \begin{align}
        \dnorm{\tilde{E}_{j+1,K}(T)} &\leq T^j \sum_{l=1}^{\min\{K-1, j\}} T^l \int^1_0dx_1\int^{x_1}_0dx_2\dots \int^{x_{l-1}}_0dx_{l-1}\sum_{\substack{j_1,\dots,j_l\in \mathbb{Z}\geq 1 \\ j_1+\dots + j_l=j}}\left(  \prod_{\kappa=l}^1  (4\lambda )^{(j_\kappa+1)}   \right) \\
        &\leq T^j \sum_{l=1}^{\min\{K-1, j\}} \frac{T^l}{l!} \sum_{\substack{j_1,\dots,j_l\in \mathbb{Z}\geq 1 \\ j_1+\dots + j_l=j}}\left(  \prod_{\kappa=l}^1  (4\lambda )^{(j_\kappa+1)}   \right) \\
        &\leq T^j \sum_{l=1}^{\min\{K-1, j\}} \frac{(4T\lambda)^l}{l!} \sum_{\substack{j_1,\dots,j_l\in \mathbb{Z}\geq 1 \\ j_1+\dots + j_l=j}}\left(  \prod_{\kappa=l}^1  (4\lambda )^{j_\kappa}   \right) \\
        &\leq (4\lambda T)^j \sum_{l=1}^{\min\{K-1, j\}} \frac{(4T\lambda)^l}{l!} \sum_{\substack{j_1,\dots,j_l\in \mathbb{Z}\geq 1 \\ j_1+\dots + j_l=j}} 1  
    \end{align}
    We now want to bound the following
    \begin{align}\label{Eq:Combinatorial_Identity}
       \sum_{\substack{j_1,\dots,j_l\in \mathbb{Z}\geq 1 \\ j_1+\dots + j_l=j}}   1     = \binom{j+l-1}{l-1} < 2^{j+l}
    \end{align}
    Substituting this back in gives
    \begin{align}
        \dnorm{\tilde{E}_{j+1,K}(T)} \leq (8\lambda T)^j \sum_{l=1}^{\min\{K-1, j\}} \frac{(8T\lambda)^l}{l!}. 
    \end{align}

    We now want to put bounds on the term \cref{Eq:Remainder_Term} which we label 
    \begin{align}
       \tilde{F}_{K} =  (-T)^K\int^1_0 dx_1\int^{x_1}_0 dx_2 \dots \int_0^{x_{p-1}} dx_K e^{i(1-x_1)T\ad_H } i\Delta(s) e^{i(x_1-x_2)T\ad_H } i\Delta(s) \dots  e^{i(x_{K-1}-x_K)T\ad_H } i\Delta(s)(\tilde{\state}(s,x_KT)).
    \end{align}
    To norm of  $\tilde{F}_{K}$, we see that 
    \begin{align}
        \dnorm{\tilde{F}_K(s,T) }&\leq \frac{T^K}{K!}\dnorm{\Delta(s)} \\
        &\leq  \frac{T^K}{K!}\left( \sum_{j=1}\dnorm{E_{j+1}}(sT)^j \right)^K\\
        &\leq \frac{T^K}{K!} \sum_{j_1\dots j_K=1} \left(\prod_{\kappa=1}^K\dnorm{E_{j_k+1}}\right)(sT)^{j_1+\dots +j_K} \\
        &\leq \frac{(4\lambda T)^K}{K!}\sum_{j=1}(sT)^{j} \sum_{j_1+\dots +j_\kappa=j}(4\lambda )^{j_k} \\
         &\leq \frac{(4\lambda T)^K}{K!}\sum_{j=1}(4\lambda sT)^{j} \sum_{j_1+\dots +j_\kappa=j} 1 \\
         &\leq \frac{(8\lambda T)^K}{K!}\sum_{j=1}(8\lambda sT)^{j}
    \end{align}
    where in the last line we have used \cref{Eq:Combinatorial_Identity}.
\end{proof}

\section{Robust Richardson Extrapolation}
\label{Sec:Richarson_Extrapolation}

We now wish to use a Richardson extrapolation scheme to construct an estimator for the expectation value of a time-evolved observable.
To do this, we can write the expectation value as a power series using \cref{Lemma:Series_Expansion} to give
\begin{align}\label{Eq:Function_Series}
    f_A(s)  = \expval{A(T)} + \sum_{j=2}^\infty \alpha_j s^j + h_K(s)
\end{align}
where $h_K(s)$ is a function that only has terms of order $O(s^K)$ and above.
If we wish to do a Richardson extrapolation procedure, we need to choose a set of points which we evaluate to construct our estimator.
We choose the points $\{s/y_j\}_{j=1}^m$, which when plugged into the series gives
\begin{align}
     f_A\left(\frac{s}{y_i}\right)  = \expval{A(T)} + \sum_{j=2}^\infty \alpha_j \left(\frac{s}{y_i}\right)^j + h_K\left(\frac{s}{y_i}\right)
\end{align}
Suppose we now wish to construct an estimator
\begin{align}\label{Eq:Estimator_Form}
    \estimator_m = \sum_{j=1}^m b_j f_A\left(\frac{s}{y_j}\right),
\end{align}
and we wish to choose values of $y_j$ such that the estimator is robust to errors in the calculation of $f_A$.
In particular, for an arbitrary choice of values $\{y_j\}_{j=1}^m$, small errors in the calculated value $f_A(y_j)$ can ``blow up'' to ruin the final estimator. 
We see that for a generic power series, the solution to find the coefficients $b_j$ such that all terms up to $O(s^m)$ are cancelled off, is given by the Vandermonde matrix equation
\begin{align}
    \begin{pmatrix}
        1 & 1 & \dots & 1 \\
        y_1^{-1} & y_2^{-1} & \dots & y_m^{-1} \\
        \vdots & \vdots &\ddots & \vdots \\
        y_1^{-m+1} & y_2^{-m+1} & \dots & y_m^{-m+1}
    \end{pmatrix}
    \begin{pmatrix}
        b_1 \\
        b_2 \\
        \vdots \\
        b_m
    \end{pmatrix}
    =
    \begin{pmatrix}
        1 \\
        0 \\
        \vdots \\
        0
    \end{pmatrix}.
\end{align}
For convenience, we have not automatically set the coefficient $\alpha_1=0$ as is implied by the form of the series in \cref{Eq:Function_Series}.
If we choose a relabelling $y_j=k_j^2,$ then the problem reduces to that studied in Ref.~\cite{low2019well} which will allow us to simply import the solution studied there.
In particular, we see that the $b_j$ coefficients are given by
\begin{align}\label{Eq:b_i_definition}
    b_j = \prod_{l\neq j} \frac{1}{1 - (k_l/k_j)^2}.
\end{align}
We straightforwardly take the solution for this from Ref.~\cite{low2019well} and choose
\begin{align}\label{Eq:Sampling_Points}
    y_j = k_j^2 = \left\lceil \frac{R}{\sqrt{x^{(2m)}_j}}\right\rceil^2 \quad \quad j\in \{1,2,\dots, m\}
\end{align}
where we choose to set $R = \frac{\sqrt{8}m}{\pi}$ and
\begin{align}
    x^{(k)}_j = \sin^2\left( \frac{\pi(2j-1)}{4k}\right).
\end{align}

\begin{lemma}[\cite{low2019well}] \label{Lemma:Step_Scaline_m}
    For the choice of $\{y_j\}$ given in \cref{Eq:Sampling_Points} gives $\norm{b_1}=O(\log(m))$, $\max_i y_i = O(m^4)$ and $\max_jy_j / y_m=O(m^2)$.
\end{lemma}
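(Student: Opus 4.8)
My plan is to treat \cref{Lemma:Step_Scaline_m} as the transcription, through the relabelling $y_j = k_j^2$, of the corresponding lemma of \cite{low2019well}: that substitution turns the Vandermonde system behind \cref{Eq:b_i_definition} into the one solved there, so all three estimates can be re-derived from elementary trigonometry. Throughout I would write $\theta_j := \pi(2j-1)/(8m)$, so that $x_j^{(2m)} = \sin^2\theta_j$ with $0 < \theta_1 < \theta_2 < \dots < \theta_m < \pi/4$. Since $\sin^2$ is strictly increasing on $(0,\pi/4)$, the $x_j^{(2m)}$ increase in $j$, hence $y_1 \ge y_2 \ge \dots \ge y_m$, so that $\max_i y_i = y_1$ and $\min_i y_i = y_m$.

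For $\max_i y_i = O(m^4)$: Jordan's inequality $\sin\theta \ge \tfrac{2}{\pi}\theta$ on $[0,\pi/2]$ gives $\sqrt{x_1^{(2m)}} = \sin\theta_1 \ge \tfrac{1}{4m}$, so $R/\sqrt{x_1^{(2m)}} \le 4mR = 4\sqrt8\,m^2/\pi$; adding $1$ for the ceiling and squaring yields $y_1 \le (4\sqrt8\,m^2/\pi + 1)^2 = O(m^4)$. For the ratio: $\theta_m = \pi/4 - \pi/(8m) \ge \pi/8$ for every $m \ge 1$, so $x_m^{(2m)} \ge \sin^2(\pi/8) > 0$ and $y_m \le (R/\sin(\pi/8) + 1)^2 = O(m^2)$, while $x_m^{(2m)} \le 1$ forces $y_m \ge R^2 = 8m^2/\pi^2 = \Omega(m^2)$; combining, $\max_j y_j/y_m = y_1/y_m = O(m^4)/\Omega(m^2) = O(m^2)$.

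The delicate claim is $\norm{b_1} = O(\log m)$. From \cref{Eq:b_i_definition}, $b_1 = \prod_{l=2}^m(1 - (k_l/k_1)^2)^{-1}$ is a product of $m-1$ factors, each in $(1,\infty)$ since $k_1 = \max_j k_j$, so the non-obvious point is that it does not blow up exponentially. I would first analyse the ``ceiling-free'' version in which $k_l/k_1$ is replaced by $\sqrt{x_1^{(2m)}/x_l^{(2m)}}$: the identity $\sin^2 A - \sin^2 B = \sin(A+B)\sin(A-B)$ then gives $1 - x_1^{(2m)}/x_l^{(2m)} = \sin(\tfrac{\pi l}{4m})\sin(\tfrac{\pi(l-1)}{4m})/\sin^2(\tfrac{\pi(2l-1)}{8m})$, so the ceiling-free $b_1$ is a ratio of products of sines over equally spaced nodes; evaluating these via $\prod_{l=1}^{n-1}\sin(\pi l/n) = n/2^{n-1}$ and its half-argument analogues, the $2^{\pm\Theta(m)}$ pieces cancel and one is left with $O(\log m)$ (indeed each factor $1 - x_1^{(2m)}/x_l^{(2m)}$ is bounded below by a positive constant and is $1 - O(1/m^2)$ for most $l$, so in this idealised regime the product is even $O(1)$). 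Finally I would control the ceilings: writing $k_j = R/\sqrt{x_j^{(2m)}} + \eta_j$ with $\eta_j\in[0,1)$ and using $y_j \ge y_m = \Omega(m^2)$, each factor equals its ceiling-free value times $1 + O(1/\sqrt{y_l}) = 1 + O(1/m)$ and never vanishes (the ceiling-free factor being bounded away from $0$), so propagating the corrections multiplies $b_1$ by $\exp(O(\sum_l 1/\sqrt{y_l})) = O(1)$. This is where the specific constant $R = \sqrt8\,m/\pi$ matters — it is large enough for the ceiling perturbations to be harmless yet keeps $\max_i y_i = O(m^4)$ — and it is the only genuinely non-routine step; getting $O(\log m)$ rather than an exponential out of a product of $m-1$ factors each strictly larger than $1$ requires the exact closed-form cancellation of the sine products plus the quantitative argument that the ceilings do not spoil it, whereas the two geometric bounds are just Jordan's inequality and monotonicity of $\sin^2$. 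Assembling the three estimates gives the lemma.
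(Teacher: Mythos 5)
Your bounds on $\max_i y_i$ and on $\max_j y_j/y_m$ are correct and essentially coincide with the paper's (much terser) argument, which just notes $R/\sqrt{x^{(2m)}_1}=O(m^2)$ and $x^{(2m)}_m=\Theta(1)$; your Jordan-inequality and ceiling bookkeeping makes the same computation explicit.

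The gap is in the conditioning claim. The quantity the lemma actually asserts control over is the vector $1$-norm $\norm{b}_1=\sum_{j=1}^m|b_j|$: the ``$\norm{b_1}$'' in the statement is a typo, as every downstream use (\cref{Eq:Expectation_Value_Difference}, \cref{Eq:Error_Decomposition}, and the step $\norm{b}_1^{1/m}=O(1)$ in \cref{Lemma:Step_Number}) is of the sum $\sum_j|b_j|$, and the paper's own proof line ``the result for $\norm{b}_1$ follows directly from Ref.~\cite{low2019well}'' is citing precisely the well-conditioning theorem for that sum. You bound only the single coefficient $|b_1|=\prod_{l\ge 2}\bigl(1-(k_l/k_1)^2\bigr)^{-1}$, which is the easiest one: since $k_1$ is the largest node every factor is positive, and, as you yourself observe, the product is $O(1)$ --- which should have been a warning sign, because the claimed $O(\log m)$ only arises when you sum over all $j$. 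For general $j$ the factors with $l<j$ are negative and the adjacent-node factors $1-(k_l/k_j)^2$ can be small in magnitude, so the individual $|b_j|$ are not controlled by your argument, and establishing $\sum_j|b_j|=O(\log m)$ is exactly the nontrivial content of the cited result (this is where the Chebyshev spacing $x^{(2m)}_j=\sin^2(\pi(2j-1)/(8m))$, the choice of $R$, and the stability under the ceilings all earn their keep). Either import that theorem as the paper does, or extend your sine-product computation to every $j$ and then bound the sum; as written, the proposal does not establish the statement in the form the rest of the paper relies on.
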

\begin{proof}
    The result for $\norm{b}_1$ follows directly from Ref.~\cite{low2019well}.
    The result for $\max_iy_i$ follows from noting that $R/\sqrt{x^{(2m)}_1}=O(m^2)$.
    Finally we see that $\max_j(y_j/y_m) = O\left( R^2/x^{(2m)}_1\times x^{(2m)}_m/R^2\right) = O(m^2)$.
\end{proof}

We can now consider constructing a well-conditioned Richardson extrapolator for time-evolved observables under the qDRIFT channel.
We see that we can construct a well-conditioned estimator of the form in \cref{Eq:Estimator_Form} by using the step-sizes given in \cref{Lemma:Step_Scaline_m}. 
Thus we define the $m^{th}$-order Richardson estimator as follows
\begin{align}
    \estimator_m(s,T) \coloneqq \sum_{j=1}^m b_j\expval{\tobsv(s_j,T)}
\end{align}
where we define $s$ 
\begin{align}
    s\equiv s_m = \frac{\ell}{y_m}, \quad \quad  s_j = \frac{\ell}{y_j} = s\frac{y_m}{y_j}  \quad \quad j\in \{1,2,\dots m\},
\end{align}
where $\ell$ is some parameter we will fix later by choosing an appropriate $s_1$.
Since rescaling by a factor $sy_1$ does not change the values of $b_j$ (from \cref{Eq:b_i_definition} we see that $b_j$ is defined in terms of ratios), then $\norm{b_1}=O(\log(m))$.
Importantly for our purposes, we see that $\max_i \frac{1}{s_i} = \max_i N_i = y_j/(s_m y_m) = O(m^2/s_1)$ which follows from \cref{Lemma:Step_Scaline_m}.

\section{Error Bounds from Richardson Extrapolation}
\label{Sec:Richardson_Bounds}

Here we show that there exists a Richardson estimator which can be computed using only $m=O(\log(1/\epsilon))$ many sample points, where $\epsilon$ is the desired precision we wish to know the time-evolved observable expectation to.

\begin{lemma} \label{Lemma:Richardson_Error}
    Let $\estimator_{m}(s, T)$ be the $m^{th}$-order Richardson estimator defined as 
    \begin{align*}
        \estimator_{m}(s, T) =\sum_{k=1}^m b_k \expval{\tilde{\obsv}(s_k,T)}, 
    \end{align*}
    where $N_k \in \mathbb{Z}$ and $s_k=1/N_k$. 
    Then the error in the extrapolation relative to the exact evolution is given by
    \begin{align}
        | \expval{\obsv(T)} - \estimator_{m}(s, T)   |\leq \norm{\obsv} \norm{b}_1  \sum_{j\geq m} s^j (8\lambda T)^j \sum_{l=1}^{K} \frac{(8\lambda T)^l}{l!},
    \end{align}
    where we assume $ K>m-1$.
\end{lemma}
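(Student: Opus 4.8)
The strategy is to feed the series from \cref{Lemma:Series_Expansion} directly into the estimator and let the coefficients $\{b_k\}$ do the cancellation. Since the top row of the Vandermonde system forces $\sum_{k=1}^m b_k = 1$, I would first write
\begin{align*}
    \estimator_m(s,T) - \expval{\obsv(T)} = \sum_{k=1}^m b_k\Bigl(\expval{\tobsv(s_k,T)} - \expval{\obsv(T)}\Bigr).
\end{align*}
For a fixed truncation order $K>m-1$, \cref{Lemma:Series_Expansion} gives $\expval{\tobsv(s_k,T)} - \expval{\obsv(T)} = \tr[\obsv(\tstate(s_k,T)-\state(T))] = \sum_{j\geq 2} s_k^j\,\tr[\obsv\,\tilde{E}_{j+1,K}(T)(\state)] + \tr[\obsv\,\tilde{F}_K(T,s_k)(\state)]$, and substituting this and exchanging the finite $k$-sum with the $j$-series produces
\begin{align*}
    \estimator_m(s,T) - \expval{\obsv(T)} = \sum_{j\geq 2}\Bigl(\sum_{k=1}^m b_k s_k^j\Bigr)\tr[\obsv\,\tilde{E}_{j+1,K}(T)(\state)] \;+\; \sum_{k=1}^m b_k\,\tr[\obsv\,\tilde{F}_K(T,s_k)(\state)].
\end{align*}

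Next I would invoke the defining property of the $\{b_k\}$: because $s_k$ is proportional to $1/y_k$ with $y_k$ as in \cref{Eq:Sampling_Points}, and the $b_k$ of \cref{Eq:b_i_definition} solve $\sum_k b_k y_k^{-j} = \delta_{j,0}$ for $j=0,1,\dots,m-1$, we get $\sum_k b_k s_k^j = 0$ for every $1\leq j\leq m-1$. Hence the orders $s^2,\dots,s^{m-1}$ vanish and only the tail $j\geq m$ survives in the first sum. Bounding this tail is then routine: $\bigl|\sum_k b_k s_k^j\bigr| \leq \norm{b}_1 \max_k s_k^{\,j}$, and since the arguments $\pi(2j-1)/(8m)$ of the sine defining $y_k$ lie in $(0,\pi/4)$ where $\sin$ is increasing, the $y_k$ are non-increasing in $k$, so $\max_k s_k = s_m = s$; combining with $\bigl|\tr[\obsv\,\tilde{E}_{j+1,K}(T)(\state)]\bigr| \leq \norm{\obsv}\dnorm{\tilde{E}_{j+1,K}(T)} \leq \norm{\obsv}(8\lambda T)^j\sum_{l=1}^{\min\{K-1,j\}}\tfrac{(8\lambda T)^l}{l!} \leq \norm{\obsv}(8\lambda T)^j\sum_{l=1}^{K}\tfrac{(8\lambda T)^l}{l!}$ from \cref{Lemma:Series_Expansion} shows the first sum is at most $\norm{\obsv}\norm{b}_1\sum_{j\geq m} s^j(8\lambda T)^j\sum_{l=1}^{K}\tfrac{(8\lambda T)^l}{l!}$, which is exactly the claimed right-hand side. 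Throughout one needs $8\lambda sT<1$ so this geometric-type tail converges; this also keeps us in the regime $2sT<1/\lambda$ demanded by \cref{Lemma:Series_Expansion}.

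The one point that needs genuine care is the remainder $\sum_k b_k\,\tr[\obsv\,\tilde{F}_K(T,s_k)(\state)]$, which must be absorbed into the same right-hand side rather than added on top of it. The clean way is not to treat $\dnorm{\tilde{F}_K}$ as a single lump but to expand $\tilde{F}_K(T,s)$ itself as a power series in $s$ — it carries $K$ explicit factors of $\Delta(s)$, each of order $\geq s$, so it begins at $s^K$ — and estimate it order by order: at each order $j\geq K$ the $\tilde{F}_K$ piece supplies precisely the missing $l\geq K$ terms, so that added to the truncated coefficient $\tilde{E}_{j+1,K}$ (which carries $l$ only up to $K-1$) the total is still dominated by $(8\lambda T)^j\sum_{l=1}^{K}\tfrac{(8\lambda T)^l}{l!}$, while for $m\leq j<K$ only $\tilde{E}_{j+1,K}$ contributes and the bound is immediate from $\sum_{l=1}^{j}\leq\sum_{l=1}^{K}$. (Alternatively, since \cref{Lemma:Series_Expansion} holds for every $K>3$ with $\dnorm{\tilde{F}_K}\to 0$ as $K\to\infty$ while each $\tilde{E}_{j+1,K}$ stabilises once $K\geq j+1$, one may pass to the limit and work with the genuine convergent power series of $f_\obsv(s)$.) I expect this reconciliation of the remainder with the precise form of the bound — plus the minor bookkeeping of choosing the overall scale so that each $s_k=1/N_k$ with $N_k\in\mathbb{Z}$ — to be the only real obstacle; everything else (linearity, the Vandermonde annihilation of low orders, the triangle inequality, and the $\tilde{E}$/$\tilde{F}$ norm bounds) drops straight out of \cref{Lemma:Series_Expansion}.
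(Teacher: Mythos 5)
Your proposal is correct and follows essentially the same route as the paper: substitute the series from \cref{Lemma:Series_Expansion} into the estimator, use the Vandermonde property of the $b_k$ to annihilate the orders $s^2,\dots,s^{m-1}$, bound the surviving tail by $\norm{b}_1$ times the worst (largest) $s_k=s_m=s$, and fold the $\tilde{F}_K$ remainder into the $l=K$ term of the inner sum. If anything you are more explicit than the paper on two points it asserts without comment — that the maximum over $k$ occurs at $k=m$ (via monotonicity of the $y_k$) and how the $\tilde{F}_K$ contribution reconciles with the stated bound — so no gaps to report.
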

\begin{proof}
    First consider the series expansion from \cref{Lemma:Series_Expansion}.
    The $m^{th}$-order Richardson estimator removes all the terms up to $O(s^{m-1})$ inclusive.
    If we choose $K \geq  m$ then we get an estimator that satisfies
    \begin{align}
        \estimator_{m}(s, T) = \expval{\obsv(T)}+ \sum_{k=1}^m b_k R_m[\expval{\tobsv(s_k,T)}]
    \end{align}
    where $R_m[\expval{\tobsv(s_k,T)}]$ is the remainder of the series for all terms of $s^m$ and above
    \begin{align}
        R_m[\expval{\tobsv(s,T)}] = \sum_{j\geq m} s^j \tr\left[\obsv \ \tilde{E}_{j+1,K}(T)(\state)\right] + \tr\left[\obsv \  F_{K}(T,s)(\state)\right].
    \end{align}
    We can then bound the error of the Richardson estimator in the following way using the submultiplicativity of the norm
    \begin{align}\label{Eq:Expectation_Value_Difference}
        |\estimator_{m}(s, T)-\expval{\obsv(T)}|\leq \norm{b}_1 \max_k \dnorm{R_m[\expval{\tobsv(s_k,T)}]}.
    \end{align}
    We see this maximum occurs for $k=m$.
    We can then bound the remainder term using \cref{Lemma:Series_Expansion} and H\"older's inequality:
    \begin{align}
        \max_k\dnorm{R_m[\expval{\tobsv(s_k,T)}]}& \leq \sum_{j\geq m} s_m^j \norm{\obsv} \dnorm{\tilde{E}_{j+1,K}(T)(\state)} + \norm{\obsv}\dnorm{\tilde{F}_K(T,s_m)(\state)} \label{Eq:Remainder_Bound} \\
        &\leq \norm{\obsv}  \left( \sum_{j\geq m} s_m^j (8\lambda T)^j \sum_{l=1}^{K-1} \frac{(8\lambda T)^l}{l!} +  \frac{(8\lambda T)^K}{K!}\sum_{j\geq m}^\infty(8\lambda s_mT)^{j}\right) \\
        &= \norm{\obsv} \sum_{j\geq m} s_m^j (8\lambda T)^j \sum_{l=1}^{K} \frac{(8\lambda T)^l}{l!},
    \end{align}
    where in \cref{Eq:Remainder_Bound} we have used both the triangle inequality and the submultiplicativity of matrix norms.
    Combining this with \cref{Eq:Expectation_Value_Difference} gives the lemma statement.
\end{proof}

\noindent We now wish to bound the number of Trotter steps needs to reach a given precision in the Richardson estimator.
\begin{lemma}\label{Lemma:Sufficient_Time_Steps}
    It is possible to construct an order-$m$ Richardson estimator, $\estimator_{m}(s, T)$, satisfying
    \begin{align}
         |\expval{A(T)} - \estimator_{m}(s, T)|\leq \epsilon \norm{\obsv}, 
    \end{align}
    where $s=s_m$ and
    \begin{align}
       \frac{1}{s_m} =  N_m = 4(8\lambda T)^2 \left(\frac{\norm{b}_1}{\epsilon}\right)^{1/m}.
    \end{align}
\end{lemma}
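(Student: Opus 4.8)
The plan is to obtain the statement as an essentially mechanical calculation from \cref{Lemma:Richardson_Error}, instantiated with the choice $K=m$ (legitimate since $m>m-1$), which gives
\[
    \left|\expval{\obsv(T)} - \estimator_{m}(s, T)\right| \;\leq\; \norm{\obsv}\,\norm{b}_1 \sum_{j\geq m} s_m^{\,j}(8\lambda T)^j \sum_{l=1}^{m}\frac{(8\lambda T)^l}{l!}.
\]
Before estimating this I would record two routine consistency checks for $s_m = 1/N_m$ with $N_m = 4(8\lambda T)^2(\norm{b}_1/\epsilon)^{1/m}$. First, the top row of the Vandermonde system forces $\sum_j b_j = 1$, so $\norm{b}_1\geq 1$ and hence $(\epsilon/\norm{b}_1)^{1/m}\leq 1$ (taking $\epsilon<1$), giving $s_m \leq \tfrac{1}{4(8\lambda T)^2}$; this yields $2 s_m T \leq \tfrac{1}{128\lambda^2 T} < \tfrac1\lambda$, so the logarithm and the series of \cref{Lemma:Series_Expansion}, and therefore \cref{Lemma:Richardson_Error}, are valid for every sampled step-size $s_k\le s_m$. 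Second, the $N_k$ must be integers; this is enforced by rounding up exactly as in the qFLO pseudocode, which only shrinks each $s_k$ and hence only tightens the bound, affecting constants only.

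The only genuine idea is how to control the inner factorial sum. Collapsing it via $\sum_{l=1}^{m}\tfrac{(8\lambda T)^l}{l!}\le e^{8\lambda T}$ would be fatal, since it forces a prefactor $e^{8\lambda T/m}$ into $N_m$, useless when $\lambda T\gg m$. Instead, every index $l$ in the inner sum satisfies $l\le m\le j$ (the outer sum runs over $j\ge m$), and $l!\ge 1$, so in the regime $8\lambda T\ge 1$ we have $\sum_{l=1}^{m}\tfrac{(8\lambda T)^l}{l!}\le \sum_{l=1}^m (8\lambda T)^l\le m\,(8\lambda T)^m$. (If instead $8\lambda T<1$, the inner sum is bounded by the constant $e-1$, the outer sum is geometric with tiny ratio, and the claim is immediate; I would dispatch that case in one line.) Pulling out the $j$-independent factor, the bound becomes $\norm{\obsv}\,\norm{b}_1\, m\,(8\lambda T)^m\sum_{j\ge m}(8\lambda s_m T)^j$, a plain geometric series.

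Finally I substitute $N_m=4(8\lambda T)^2(\norm{b}_1/\epsilon)^{1/m}$. Set $x := (8\lambda T)^2 s_m = \tfrac14(\epsilon/\norm{b}_1)^{1/m}\le\tfrac14$, so that $(8\lambda T)^m(8\lambda s_m T)^m = x^m = \epsilon/(4^m\norm{b}_1)$ and $8\lambda s_m T = x/(8\lambda T)\le x\le\tfrac14$; hence
\begin{align*}
    \left|\expval{\obsv(T)} - \estimator_{m}(s, T)\right|
    &\leq \norm{\obsv}\,\norm{b}_1\, m\,(8\lambda T)^m \sum_{j\geq m}(8\lambda s_m T)^j
    = \norm{\obsv}\,\norm{b}_1\,\frac{m\,x^m}{1-8\lambda s_m T} \\
    &\leq \norm{\obsv}\,\norm{b}_1\,\frac{4}{3}\,m\,x^m
    = \norm{\obsv}\,\norm{b}_1\,\frac{4}{3}\,m\,\frac{\epsilon}{4^m\,\norm{b}_1}
    = \frac{4m}{3\cdot 4^m}\,\epsilon\,\norm{\obsv}
    \leq \epsilon\,\norm{\obsv},
\end{align*}
the last inequality because $m/4^m\le 1/4$ for every $m\ge 1$. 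The main obstacle is therefore not a hard estimate but precisely the bookkeeping choice in the middle step: one must resist bounding the factorial tail by $e^{8\lambda T}$ and instead use $l\le j$, so that the surplus powers of $8\lambda T$ are absorbed into the geometric series in $(8\lambda T)^2 s_m$ rather than surviving as an exponential prefactor that would destroy the $(\lambda T)^2$ scaling of $N_m$.
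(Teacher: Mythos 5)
Your proposal is correct and follows essentially the same route as the paper: instantiate \cref{Lemma:Richardson_Error} with $K=m$, pull a factor of $(8\lambda T)^m$ out of the inner factorial sum rather than bounding it by $e^{8\lambda T}$, absorb it into the geometric series in $(8\lambda T)^2 s_m$, and substitute the stated $N_m$. The paper bounds the inner sum by $2\,\eta^m$ with $\eta=\max\{8\lambda T,1\}$ where you use $m(8\lambda T)^m$ plus an explicit case split for $8\lambda T<1$, but this is a cosmetic difference (and your handling of the small-$\lambda T$ case and the $\norm{b}_1\ge 1$ check is, if anything, slightly more careful).
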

\begin{proof}
    We can take the error expression from \cref{Lemma:Richardson_Error}.
    We first denote $\eta = \max\{  8\lambda T, 1 \}$.
    We see that 
    \begin{align}
        \sum_{l=1}^K \frac{(8\lambda T)^l}{l!} &\leq \eta^K \sum_{l=1}^K \frac{1}{l!} \\
        &\leq \eta^K (e-1) \\
        &\leq 2 \eta^K.
    \end{align}
    Then the full error of the estimator is bounded as
    \begin{align}
        & \leq 2\norm{\obsv}\norm{b_1} \eta^K \sum_{j\geq m} (8\lambda s T)^j \\
        &\leq 2\norm{\obsv}\norm{b_1} \eta^K(8\lambda sT)^m \sum_{j=0}^\infty (8\lambda sT)^j
    \end{align}
    We have free choice over $K$ provided $K>m-1$.
    Setting $K=m$, we have
    \begin{align}
        &\leq 2\norm{\obsv}\norm{b_1} ((8\lambda T)^2s)^m \sum_{j=0} (8\lambda sT)^j \\
        &\leq 2\norm{\obsv}\norm{b_1} ((8\lambda T)^2s)^m \frac{1}{1- (8\lambda T s)},
    \end{align}
    where we have assumed the geometric series converges.
    To ensure this becomes arbitrarily small as $m$ increases, we set $1/s = 4(8\lambda T)^2 \left(\frac{\norm{b}_1}{\epsilon}\right)^{1/m}$.
    With this choice we see that for sufficiently large $\lambda T$ then  $\frac{1}{1- (8\lambda T s)}<2$, and hence we have that 
    \begin{align}
         |\expval{A(T)} - \estimator_{m}(s, T)|\leq \epsilon \norm{\obsv}, 
    \end{align}
    as per the lemma statement.
\end{proof}

\noindent These bounds tell us the number of repeated steps that are sufficient to reach a certain error. 
We now show that with an appropriate choice of $m,$ the maximum circuit depth of the circuits scales as  $O(\log(1/\epsilon))$ asymptotically.
\begin{lemma}\label{Lemma:Step_Number}
    Let $\estimator_{m}(s, T)$ be an order-$m$ Richardson estimator.
    To reach precision $\epsilon,$ it is sufficient to use $m=O\left(\frac{\log(1/\epsilon)}{\log\log(1/\epsilon)}\right)$ sample points, with a maximum number of time-steps scaling as 
    \begin{align}
        \max_i N_i = O\left( (\lambda T)^2 \frac{\log(1/\epsilon)}{(\log\log(1/\epsilon))^2} \right)
    \end{align}
\end{lemma}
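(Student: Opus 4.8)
The plan is to start from the two quantitative handles already established: Lemma~\ref{Lemma:Sufficient_Time_Steps}, which tells us that an order-$m$ estimator with $N_m = 4(8\lambda T)^2(\norm{b}_1/\epsilon)^{1/m}$ time-steps achieves error $\epsilon\norm{A}$, and Lemma~\ref{Lemma:Step_Scaline_m}, which controls the spread of the sampling points via $\norm{b}_1 = O(\log m)$ and $\max_j y_j / y_m = O(m^2)$. Combining these, the maximum number of time-steps used across all sample points is $\max_i N_i = (\max_j y_j/y_m)\cdot N_m = O(m^2)\cdot O\big((\lambda T)^2 (\log m /\epsilon)^{1/m}\big)$. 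So the whole lemma reduces to choosing $m$ as a function of $\epsilon$ that makes the factor $(\log m/\epsilon)^{1/m}$ into a constant (or better), while not letting the $m^2$ prefactor grow too fast.

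The key computation is to analyze $(\log m/\epsilon)^{1/m} = \exp\!\big(\tfrac{1}{m}(\log(1/\epsilon) + \log\log m)\big)$. I would set $m = \lceil c\,\log(1/\epsilon)/\log\log(1/\epsilon)\rceil$ for a suitable constant $c$; then $\tfrac{1}{m}\log(1/\epsilon) = O(\log\log(1/\epsilon))$, so $\exp(\tfrac1m \log(1/\epsilon)) = \polylog(1/\epsilon)$ — wait, that is not yet a constant. To actually pin the exponent to a constant one needs $m \gtrsim \log(1/\epsilon)$ up to lower-order corrections; I would instead verify that with $m = \Theta(\log(1/\epsilon)/\log\log(1/\epsilon))$ the exponent $\tfrac{1}{m}\log(1/\epsilon)$ is $\Theta(\log\log(1/\epsilon))$, hence $(\log m/\epsilon)^{1/m} = (\log(1/\epsilon))^{\Theta(1)}$, which is subpolynomial in $1/\epsilon$, and then the $\log m$ term is absorbed harmlessly since $\log m = O(\log\log(1/\epsilon))$ contributes only $\exp(O(\log\log\log/\log)) = 1+o(1)$. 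Plugging back, $\max_i N_i = O(m^2 (\lambda T)^2 (\log(1/\epsilon))^{O(1)})$; and since $m^2 = O\big((\log(1/\epsilon))^2/(\log\log(1/\epsilon))^2\big)$ this gives the claimed $\max_i N_i = O\big((\lambda T)^2 \log(1/\epsilon)/(\log\log(1/\epsilon))^2\big)$ only if the residual $(\log(1/\epsilon))^{O(1)}$ factor is in fact $O(1)$ — so I would actually be more careful and show the exponent tends to a constant, i.e. choose $c$ so that $\tfrac{1}{m}\log(1/\epsilon) \to \text{const}$, e.g. take $m$ slightly larger, $m = \Theta(\log(1/\epsilon)/\log\log(1/\epsilon))$ with the implied constant chosen so $(\log m/\epsilon)^{1/m} \le C$ for an absolute constant $C$ for all small $\epsilon$. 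A short monotonicity/asymptotics argument on the function $\phi(m) = \tfrac1m(\log(1/\epsilon)+\log\log m)$ does this.

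The main obstacle is bookkeeping the competing requirements on $m$: it must be large enough that $(\norm{b}_1/\epsilon)^{1/m}$ collapses to a constant, but the circuit depth carries an $m^2$ penalty from the ratio $\max_j y_j/y_m = O(m^2)$ (Lemma~\ref{Lemma:Step_Scaline_m}), so one wants $m$ as small as possible. The sweet spot is $m \asymp \log(1/\epsilon)/\log\log(1/\epsilon)$, and the delicate point is checking that at this value the exponential factor $\exp(\tfrac1m \log(1/\epsilon)) = \exp(\Theta(\log\log(1/\epsilon)))$ does \emph{not} reintroduce a polynomial-in-$1/\epsilon$ blowup — it introduces only a $\polylog(1/\epsilon)$ factor, which is then dominated/absorbed appropriately, or, with the constant in $m$ tuned, is a genuine $O(1)$. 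I would also double-check the edge condition $K > m-1$ and the ``sufficiently large $\lambda T$'' caveat from Lemma~\ref{Lemma:Sufficient_Time_Steps} carry through unchanged. Everything else — substituting $\norm{b}_1 = O(\log m)$, multiplying by $O(m^2)$ — is routine once the choice of $m$ is fixed.
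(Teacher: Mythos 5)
Your setup is the same as the paper's: write $\max_i N_i = O(m^2)\,N_m$ via \cref{Lemma:Step_Scaline_m}, substitute $N_m = 4(8\lambda T)^2(\norm{b}_1/\epsilon)^{1/m}$ from \cref{Lemma:Sufficient_Time_Steps}, then choose $m$. The gap is exactly at the step you flagged, and it does not close. With $m = c\log(1/\epsilon)/\log\log(1/\epsilon)$ you have
\begin{align*}
\left(\frac{1}{\epsilon}\right)^{1/m} \;=\; \exp\!\left(\frac{\log(1/\epsilon)}{m}\right) \;=\; \exp\!\left(\tfrac{1}{c}\log\log(1/\epsilon)\right) \;=\; \bigl(\log(1/\epsilon)\bigr)^{1/c},
\end{align*}
a genuine $\polylog(1/\epsilon)$ factor for every fixed $c$: the exponent $\tfrac{1}{m}\log(1/\epsilon)=\tfrac{1}{c}\log\log(1/\epsilon)$ diverges, so no tuning of the implied constant makes it tend to a constant. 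Forcing $(1/\epsilon)^{1/m}=O(1)$ requires $m=\Omega(\log(1/\epsilon))$. Separately, even granting that factor were $O(1)$, your prefactor $m^2=\Theta\bigl(\log^2(1/\epsilon)/(\log\log(1/\epsilon))^2\bigr)$ already exceeds the claimed $\log(1/\epsilon)/(\log\log(1/\epsilon))^2$ by a full factor of $\log(1/\epsilon)$; the sentence asserting that the claimed bound follows ``if the residual factor is $O(1)$'' contains an arithmetic slip.

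More fundamentally, the quantity you are minimizing, $m^2 e^{\log(1/\epsilon)/m}$, attains its minimum at $m=\tfrac12\log(1/\epsilon)$ with value $\Theta(\log^2(1/\epsilon))$, so the route through \cref{Lemma:Step_Scaline_m,Lemma:Sufficient_Time_Steps} cannot give anything better than $\max_i N_i = O\bigl((\lambda T)^2\log^2(1/\epsilon)\bigr)$; the stated $\log(1/\epsilon)/(\log\log(1/\epsilon))^2$ is unreachable by this argument. For what it is worth, the paper's own proof follows the same skeleton but gets to the stated bound via the substitution $\epsilon^{-1/m}=m/\log^2(1/\epsilon)$, which is itself incompatible with $m>0$ and $\epsilon<1$ (the left side exceeds $1$, the right side is $o(1)$), so you should not try to reverse-engineer that step either. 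The honest conclusion of your computation is $m=\Theta(\log(1/\epsilon))$ and $\max_i N_i=O((\lambda T)^2\log^2(1/\epsilon))$; your instinct to stop and verify the exponent was the right one.
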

\begin{proof}
    We see that a sufficient number of time-steps to achieve an error $\epsilon \norm{\obsv} $ is given in \cref{Lemma:Sufficient_Time_Steps}.
    Noting that $\norm{b}_1 = O(\log(m))$ then $\norm{b}_1^{1/m} = O(\log^{1/m}(m)) = O(1).$
    We can then choose 
    \begin{align}
        \frac{1}{\epsilon^{1/m}} = \left( \frac{m}{\log^2(1/\epsilon)}\right)
    \end{align}
    Substituting this into the expression for $N_m$ from \cref{Lemma:Sufficient_Time_Steps}, we have
    \begin{align*}
        N_{\max} = O\left( (\lambda T)^2 \frac{m^3}{\log^2(1/\epsilon)} \right).
    \end{align*}
    We now define $x = \frac{m}{\log^2(1/\epsilon)}$, we can write that
    \begin{align}
        \frac{1}{\epsilon} &= x^{x\log^2(1/\epsilon)} \\
        \implies x^x &= \left( \frac{1}{\epsilon} \right)^{1/\log^2(1/\epsilon)} \\
        &=: a.
    \end{align}
    Solving for $x$ then gives
    \begin{align}
        x = O\left( \frac{\log(a)}{\log\log(a)} \right)
    \end{align}
    which gives us
    \begin{align}
        m = x\log^2\left(\frac{1}{\epsilon}\right)=O\left( \frac{\log(1/\epsilon)}{\log\log(1/\epsilon)}\right)
    \end{align}
    which gives a maximum number of steps as
    \begin{align}
        N_{\max} = O\left( (\lambda T)^2 \frac{m^3}{\log^2(1/\epsilon)} \right) = O\left( (\lambda T)^2 \frac{\log(1/\epsilon)}{(\log\log(1/\epsilon))^3} \right).
    \end{align}
    Finally, the total number of gates used goes as $O(mN_{\max})$.
    This gives
    \begin{align}
        mN_{\max} =  O\left( (\lambda T)^2 \frac{\log^2(1/\epsilon)}{(\log\log(1/\epsilon))^2} \right).
    \end{align}

\end{proof}

\section{Robustness to Measurement Error}
\label{Sec:Robust_Richardson}

So far we have assumed that when constructing the Richardson estimator, we have perfect measurements. 
Naturally, if we are taking data in practice, this will not be the case, and there will be some measurement error associated with the system.
We need to ensure that our estimator is well-conditioned, and will not be too adversely affected by these errors.
Let $\estimator_m$ be the Richardson estimator computed using perfect data, and let $\testimator_m$ be the Richardson estimator computing with data where each point may have error up to $\epsdata$.
\begin{align}
    |f_A(0) - \testimator_m(s)| &\leq |f_A(0) - \estimator_m(s,T)| + |\estimator_m(s) - \testimator_m(s,T)|  \nonumber \\
     &\leq |f_A(0) - \estimator(s,T)| + \left|\sum_{k=0}^{m-1} b_k f_A(s_k) - \sum_{k=0}^{m-1} b_k \tilde{f}(s_k) \right|  \nonumber \\
    &\leq \norm{\obsv}(\epsint + \norm{b}_1\epsdata). \label{Eq:Error_Decomposition}
\end{align}
Thus, if we want to each an overall error of $\epsilon$ in our final estimate, we can make the following arbitrary partition of the error
\begin{align}
    \epsint = \frac{\epsilon}{2}, \quad \quad  \epsdata = \frac{\epsilon}{2\norm{b}_1}.
\end{align}
Since $\norm{b}_1 = O(\log(m))=O(\log\log(1/\epsilon))$ as per \cref{Lemma:Step_Scaline_m}, this does not require significant overheads.
From here we realise there are two potential methods for estimating the data points at a particular point. 
We can apply a straightforward measurement of the observable $A$. 
Generically, this requires a number of samples that scales as $O(1/\epsdata^2)=\tilde{O}(1/\epsilon^2)$, but the only requirement is that we run the qDRIFT channel for the appropriate number of iterations.

 \begin{table}[H]
        \centering
        \begin{tabular}{c||c|c}
             & \textbf{Max Circuit Depth} & \textbf{Total Gate Cost} \\
           \hline\hline
           \textbf{Resource Scaling}    & $O\left( (\lambda T)^{2}\log\big( \frac{1}{\epsilon}\big) \right)$  & $O\left( \frac{(\lambda T)^{2}}{\epsilon^2}\log^2\big( \frac{1}{\epsilon}\big) \right)$
        \end{tabular}
        \label{Table:Resource_Costs}
    \end{table}
    \noindent This proves the main result given in \cref{Theorem:Main_Theorem}.

\section{Numerics} 
Here we test the above results using the Heisenberg model on a length $L$ 1D chain, defined as:
\begin{align*}
    H = \sum_{i=1}^{L-1} \left( X_{i}X_{i+1} + Y_{i}Y_{i+1} +Z_{i}Z_{i+1} \right) + \sum_{i=1}^{L} \mu_i Z_i
\end{align*}
where we choose the values $\{\mu_i\}_{i=1}^L$ uniformly randomly in the interval $[-1,1]$.
This Hamiltonian has been well studied in the quantum simulation literature such as in Refs. \cite{childs2018toward, watson2024exponentially}.

\begin{figure}[h!]
    \centering
    \begin{minipage}{0.45\textwidth}
        \centering
        \includegraphics[width=1.0\textwidth]{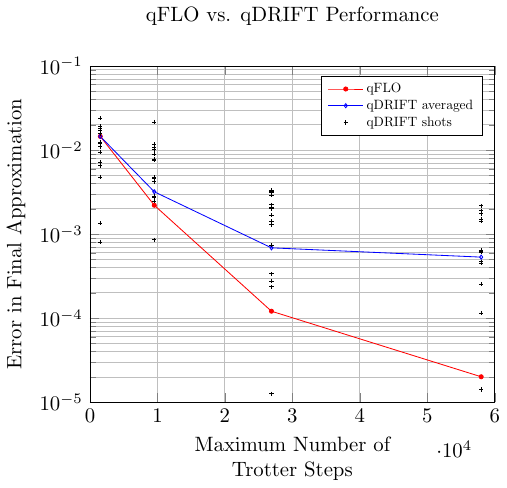}
    \end{minipage}\hfill
    \begin{minipage}{0.45\textwidth}
        \centering
        \includegraphics[width=1.0\textwidth]{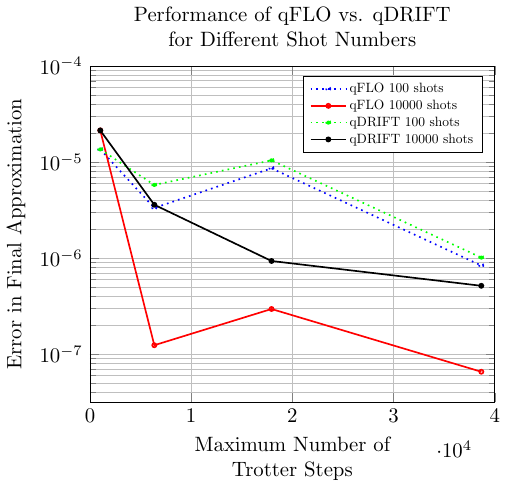} 
    \end{minipage}
    \caption{Left: qDRIFT and qFLO comparison for an $L=4$ Heisenberg model. 
    Each `+' represents 100 shots of qDRIFT averaged together, and the blue line represents an average value.
    Right: qDRIFT and qFLO comparison, but with different numbers of shots. 
    The dashed lines show where both qFLO and qDRIFT are limited by shot precision, whereas the solid lines show where only qFLO is limited by the shot precision.
    }
    \label{Fig:Fixed_Time_Change_Degree}
\end{figure}

We demonstrate the performance to qFLO vs. qDRIFT in \Cref{Fig:Fixed_Time_Change_Degree} (left).
Unlike the extrapolation routine for deterministic product formulae, here we need to average over a much larger number of shots to get comparable accuracy.
\Cref{Fig:Fixed_Time_Change_Degree} (right) demonstrates how the shot number can limit performance.

 \section*{Acknowledgements}
 {\begingroup
		\hypersetup{urlcolor=navyblue}
  The author would like to thank \href{https://orcid.org/0000-0001-8564-446X}{Mohsen Bagherimehrab} for his extensive feedback on this work, and in particular for demonstrating that the scaling in \cref{Lemma:Step_Number} can be improved from $O(\log^2(1/\epsilon))$ to  $O(\log(1/\epsilon))$. 
  The author would like to acknowledge very useful discussions with \href{https://orcid.org/0000-0003-1478-7230}{Jacob Watkins}, Mauro Morales, \href{https://orcid.org/0000-0003-2848-1216}{Yuxin Wang}, \href{https://orcid.org/0000-0003-0105-7677}{Christopher Kang} and \href{https://orcid.org/0000-0002-0335-9508}{Victor Albert}.
  He would further like to thank \href{https://orcid.org/0000-0002-3501-5734}{Kohei Nakaji} for useful feedback on the paper.
   He also acknowledges the support of his fianc\'ee Florence, after whom the qFLO algorithm presented in this work is named.
  		\endgroup}
 
 J.D.W. acknowledges support from the United States Department of Energy, Office of Science, Office of Advanced Scientific Computing Research, Accelerated Research in Quantum Computing program, and also NSF QLCI grant OMA-2120757.

{\begingroup
		\hypersetup{urlcolor=navyblue}
\printbibliography[heading=bibintoc]
		\endgroup}

\appendix

\section{Existence of the Logarithm of the qDRIFT Channel}
\label{Sec:Existence_of_Logarithm}

A subtle point is that the logarithm for the qDRIFT channel $\cE$ may not generally exist, where 
\begin{align}
    \cE = \sum_j p_j e^{-it\lambda \ad_{H_j}}
\end{align}
where $p_j=h_j/\lambda$ and $\lambda = \sum_j h_j$ is the qDRIFT channel.
Indeed, for general Hamiltonians and timescales $t$ it does not\footnote{Consider the single-qubit Hamiltonian defined by local terms $I,X,Y,Z$ and $t=\pi/2$. In this case $\cE$ is the totally depolarizing channel and maps all quantum states to the maximally mixed state, and hence is not invertible. Since it is not invertible, its logarithm cannot exist.}.
In this section we show that provided the timescale is $t$ is small enough, then $\log(\cE)$ exists.
\begin{theorem}[Theorem 1.27 of \cite{higham2008functions}] \label{Theorem:Log_Existence}
    Let $A\in \mathbb{C}^{n\times n}$ be a complex square matrix.
    The matrix logarithm $\log(A)$ exists provided $A$ has trivial kernel. 
\end{theorem}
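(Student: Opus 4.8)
The statement is the standard fact that a square complex matrix possesses a logarithm precisely when it is nonsingular, and since for a square matrix ``trivial kernel'' is synonymous with invertibility, the plan is to construct $\log(A)$ explicitly from the Jordan form. First I would write $A = PJP^{-1}$ with $J = \bigoplus_k J_k$ a direct sum of Jordan blocks $J_k = \lambda_k\1 + N_k$; nonsingularity of $A$ forces every eigenvalue $\lambda_k \neq 0$, and each $N_k$ is nilpotent, say $N_k^{d_k} = 0$. Since the matrix exponential is invariant under similarity and acts block-diagonally on direct sums, it suffices to produce a logarithm of each individual block $J_k$ and then set $\log(A) \coloneqq P\big(\bigoplus_k \log(J_k)\big)P^{-1}$.

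For a single block, choose any complex number $\mu_k$ with $e^{\mu_k} = \lambda_k$ (possible exactly because $\lambda_k \neq 0$), factor $J_k = \lambda_k(\1 + \lambda_k^{-1}N_k)$, and define
\begin{align}
    \log(J_k) \coloneqq \mu_k \1 + \sum_{j=1}^{d_k - 1} \frac{(-1)^{j+1}}{j}\left(\lambda_k^{-1}N_k\right)^j,
\end{align}
which is a finite sum by nilpotency. The key verification is that $e^{\log(J_k)} = J_k$: since $\mu_k\1$ commutes with the nilpotent summand, one has $e^{\log(J_k)} = \lambda_k\,\exp\!\big(\sum_{j\ge1}\tfrac{(-1)^{j+1}}{j}(\lambda_k^{-1}N_k)^j\big)$, and the formal power-series identity $\exp(\log(1+x)) = 1+x$ specialises to the nilpotent matrix $\lambda_k^{-1}N_k$ (all series terminating, all matrices involved mutually commuting), giving $e^{\log(J_k)} = \lambda_k(\1 + \lambda_k^{-1}N_k) = J_k$. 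For context the converse is immediate: if $A = e^X$ then $e^{-X}$ is a two-sided inverse of $A$, so $A$ has trivial kernel.

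I expect the only genuine obstacle to be the bookkeeping that justifies transferring the scalar identity $\exp(\log(1+x)) = 1+x$ to a nilpotent matrix argument --- this is legitimate because both sides are polynomials in the single self-commuting matrix $\lambda_k^{-1}N_k$, so equality of formal power-series coefficients survives substitution, with the series truncated by $N_k^{d_k}=0$. Everything else --- existence of the Jordan form over $\mathbb{C}$, similarity-invariance of $e^{(\cdot)}$, and its compatibility with direct sums --- is routine. An alternative route avoiding Jordan forms uses the holomorphic functional calculus: since $A$ is invertible one has $0 \notin \sigma(A)$, so one may pick a branch of $\log$ holomorphic on a neighbourhood of the (finite) spectrum, using a ray from the origin through none of the eigenvalues as branch cut, define $\log(A)$ by the Riesz--Dunford integral, and conclude $e^{\log(A)} = A$ from the functional-calculus identity $\exp\circ\log = \mathrm{id}$ near $\sigma(A)$. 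In the paper itself one may of course simply defer to \cite{higham2008functions}; the above is the content behind that citation.
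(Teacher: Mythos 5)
Your proposal is correct. Note that the paper does not prove this statement at all --- it is imported verbatim as a cited fact from Higham's book --- so there is no in-paper argument to compare against; your Jordan-block construction (reduce to a single block $\lambda_k\1+N_k$ with $\lambda_k\neq 0$, take any $\mu_k$ with $e^{\mu_k}=\lambda_k$, define the logarithm by the terminating Mercator series in the nilpotent part, and verify $e^{\log(J_k)}=J_k$ via the formal identity $\exp(\log(1+x))=1+x$) is precisely the standard proof underlying that citation, and your justification for transferring the scalar power-series identity to a commuting nilpotent argument is the right one. The only caveat worth recording is that the resulting $\log(A)$ depends on the choice of branches $\mu_k$ and of $P$, but the theorem as stated only asserts existence, so nothing more is needed.
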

\noindent Here we prove that $\cE$ has non-trivial kernel provided $t$ is small enough.
\begin{lemma}
    The matrix logarithm $\log \cE$ exists provided $t<1/(2\lambda)$, where $\lambda = \sum_j h_j$.
\end{lemma}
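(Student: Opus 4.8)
The plan is to show that $\cE$ is invertible whenever $t < 1/(2\lambda)$: by \cref{Theorem:Log_Existence}, any complex square matrix with trivial kernel (equivalently, nonsingular) admits a logarithm, so invertibility of $\cE$ is all we need. Writing $\cE = \sum_j p_j \cU_j$ with $\cU_j \coloneqq e^{-it\lambda\ad_{H_j}}$ and using $\sum_j p_j = 1$, I would express $\cI - \cE = \sum_j p_j(\cI - \cU_j)$ and argue that $\norm{\cI - \cE} < 1$ in a submultiplicative norm; the Neumann series $\sum_{k\geq 0}(\cI - \cE)^k$ then converges and gives $\cE^{-1}$, so $\cE$ has trivial kernel.

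The key estimate is a bound on $\norm{\cI - \cU_j}$, and the natural norm to use is the operator norm induced by the Hilbert--Schmidt inner product $\langle X, Y\rangle = \tr(X^\dagger Y)$ on matrices. With respect to this inner product $\ad_{H_j}$ is self-adjoint (immediate from $H_j^\dagger = H_j$ and cyclicity of the trace), hence $\cU_j = e^{-it\lambda\ad_{H_j}}$ is unitary, so $\cI - \cU_j$ is normal and its induced operator norm equals its spectral radius. I would then note that the eigenvalues of $\ad_{H_j}$ are the pairwise differences $\mu_a - \mu_b$ of eigenvalues of $H_j$, hence lie in $[-2\norm{H_j}, 2\norm{H_j}] = [-2,2]$ since $\norm{H_j} = 1$. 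Consequently the eigenvalues of $\cI - \cU_j$ have the form $1 - e^{-it\lambda\nu}$ with $\nu \in [-2,2]$, and from $\abs{1 - e^{i\theta}} = 2\abs{\sin(\theta/2)} \leq \abs{\theta}$ we get $\norm{\cI - \cU_j} \leq 2t\lambda$. Summing against the probabilities, $\norm{\cI - \cE} \leq \sum_j p_j \cdot 2t\lambda = 2t\lambda$, which is strictly less than $1$ exactly when $t < 1/(2\lambda)$.

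To finish, since the induced operator norm is submultiplicative and $\norm{\cI - \cE} < 1$, the Neumann series converges and exhibits $(\cI - (\cI - \cE))^{-1} = \cE^{-1}$; invertibility is independent of which norm on the finite-dimensional space of superoperators one works with, so $\cE$ has trivial kernel as a complex matrix and \cref{Theorem:Log_Existence} yields existence of $\log\cE$. I expect the only delicate point to be the bookkeeping around the choice of norm: it is the Hilbert--Schmidt-induced norm, in which $\ad_{H_j}$ is self-adjoint and $\cU_j$ is unitary, that produces the sharp threshold $t < 1/(2\lambda)$; a cruder estimate such as $\norm{\cI - \cU_j} \leq e^{2t\lambda} - 1$ obtained from $\norm{\ad_{H_j}} \leq 2$ in the operator-norm-induced superoperator norm would only give the weaker constraint $t < \log(2)/(2\lambda)$.
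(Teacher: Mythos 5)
Your proof is correct and reaches the same threshold $t<1/(2\lambda)$, but by a genuinely different route. The paper argues by contradiction directly on the kernel: it uses the variation-of-parameters formula (\cref{Lemma:Variation_of_Parameters}) to write $\cE(B)-B = -it\lambda\sum_j p_j\int_0^1 dx\,\ad_{H_j}e^{-it\lambda x\ad_{H_j}}(B)$, bounds this in the Schatten $1$-norm using $\norm{\ad_{H_j}(X)}_1\leq 2\norm{H_j}\norm{X}_1$ and the fact that $e^{-i\theta\ad_{H_j}}$ is a trace-norm isometry, and concludes that $\cE(B)=0$ forces $\norm{B}_1 < 2t\lambda\norm{B}_1$, impossible for $B\neq 0$. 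You instead establish $\norm{\cI-\cE}\leq 2t\lambda<1$ in the Hilbert--Schmidt-induced norm by a spectral argument (self-adjointness of $\ad_{H_j}$, normality of $\cI-\cU_j$, $\abs{1-e^{i\theta}}\leq\abs{\theta}$) and invoke the Neumann series; this is slightly stronger in that it exhibits an explicit inverse rather than only trivial kernel, and it is arguably the more conceptual derivation of the constant $2t\lambda$. The paper's version has the virtue of working in the trace norm natural to channels and reusing a lemma already needed elsewhere. One small caveat on your closing remark: the Hilbert--Schmidt structure is not actually needed for the sharp threshold --- the same bound $\norm{\cI-\cU_j}_{1\to 1}\leq 2t\lambda$ follows in the Schatten-$1$-induced norm from the integral representation $\cI-\cU_j = it\lambda\int_0^1 dx\,\ad_{H_j}e^{-it\lambda x\ad_{H_j}}$, which is precisely how the paper avoids the crude $e^{2t\lambda}-1$ estimate; only a term-by-term bound on the exponential series is lossy.
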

\begin{proof}
Let $B$ be a matrix $B\in \mathbb{C}^{n\times n}$.
Using the variation-of-parameters formula, \cref{Lemma:Variation_of_Parameters}, we can write
\begin{align}
    e^{-it\lambda \ad_{H_j}}B   = B + (-it\lambda )\int_0^1 dx \ \  \ad_{H_j}e^{-it\lambda x\ad_{H_j}}B. 
\end{align}
Then if we expand $\cE$ we see that:
\begin{align}
    \cE(B) = \sum_j p_je^{-it\lambda \ad_{H_j}}B = B - it\lambda \sum_j p_j \int_0^1 dx \ \  \ad_{H_j}  e^{-it\lambda x\ad_{H_j}}(B)
\end{align}
We now proceed with the aim of proving a contradiction.
Assume that $B$ is in the kernel of $\cE$, then we have that
\begin{align}
    B = it\lambda \sum_j p_j \int_0^1 dx \  \ad_{H_j}e^{-it\lambda x\ad_{H_j}} B 
\end{align}
We take the $1$-norm of the right-hand side:
\begin{align}
    \norm{it\sum_j p_j \int_0^1 dx \ \ad_{H_j} e^{-itx\lambda \ad_{H_j}} B} 
    &\leq \left( \sum_k p_k \right)\left( 2t \lambda \norm{B}_1\max_j \norm{H_j}  \right) \\
    &\leq 2t\lambda  \norm{B}_1  \max_j \norm{H_j} \\
    &\leq 2t\lambda  \norm{B}_1  ,
\end{align}
where we have used that $\norm{H_j}\leq 1$.
Now choose $2t \lambda  <1$. 
This would imply that
\begin{align}
    \norm{B}_1 &<  2t\lambda  \norm{B}_1.
\end{align}
Here we have used $t>0$ by assumption, and $\norm{B}_1>0$ unless $B=\mathbf{0}_{n\times n},$ (by the properties of proper norms) then we have a contradiction.
Thus, assuming $B\neq \mathbf{0}_{n\times n},$ then $\cE$ must have trivial kernel when $t< 1/(2\lambda)$ by \cref{Theorem:Log_Existence}.
\end{proof}

\section{Bounding the Series Expansion for $\cG(s)$}
First we find an explicit series expansion for the $\cG(s)$ function, where from \cref{Eq:E_j_Definition}, we remember that 
\begin{align}
    \cG(s) = \ad_H + \sum_{j=1}^\infty (sT)^jE_{j+1},
\end{align}
and that $E_j$ has absorbed a factor of $\lambda^j$.
We can then bound the coefficients in the following lemma.
\begin{lemma}
\label{Lemma:Ek_Bound}
    The coefficients $E_j$ satisfy the following bound
    \begin{align}
        \dnorm{E_j}\leq (4\lambda )^j
    \end{align}
\end{lemma}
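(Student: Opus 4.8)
The plan is to bound the diamond norm of $E_j$ by working directly from the logarithm series \cref{Eq:Log_Series}, $G' = (\cI-\cE_s) + \sum_{k\geq 2}\frac1k(\cI-\cE_s)^k$, and tracking how powers of $t=sT$ are distributed among the terms. First I would establish the elementary estimate $\dnorm{\cI-\cE_s}\leq 2t\lambda$: since $\cE_s = \sum_j p_j e^{-it\lambda\ad_{H_j}}$ and $p_j\geq 0$, $\sum_j p_j = 1$, the triangle inequality together with the variation-of-parameters representation $e^{-it\lambda\ad_{H_j}} - \cI = -it\lambda\int_0^1 \ad_{H_j}e^{-it\lambda x\ad_{H_j}}\,dx$ and $\dnorm{\ad_{H_j}}\leq 2\norm{H_j}=2$ gives $\dnorm{e^{-it\lambda\ad_{H_j}}-\cI}\leq 2t\lambda$, hence $\dnorm{\cI-\cE_s}\leq 2t\lambda$. (This is the same calculation already used in \cref{Sec:Existence_of_Logarithm}.) More refined, I would split off the first-order part: $\cI - \cE_s = it\lambda\ad_H + \cR(s)$ where $\cR(s)$ collects the $O(t^2)$ remainder, and check $\dnorm{\cR(s)}\leq \tfrac12(2t\lambda)^2 \cdot \tfrac{1}{1-2t\lambda}$-type bound from the integral remainder of each exponential — so $\cR$ contributes only to $E_j$ with $j\geq 2$.

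Next I would expand $(\cI-\cE_s)^k$. Since $\dnorm{\cI-\cE_s}\leq 2t\lambda$ and the lowest-order term of $\cI-\cE_s$ is $it\lambda\ad_H$ (order $t^1$), the operator $(\cI-\cE_s)^k$ contributes only to orders $t^k$ and higher, with total diamond norm at most $(2t\lambda)^k/k$ after the $1/k$ weight. Therefore $G' = \sum_{m\geq 1} (it)^m \tilde E_m'$ where, collecting contributions, $\dnorm{\tilde E_m'}$ gets a term from each $k\leq m$, and a crude bound is $\dnorm{\tilde E_m'}\lambda^{-m}\leq \sum_{k=1}^m \binom{m-1}{k-1}$-style combinatorial factor times $2^m$; I expect this to close to $\dnorm{E_m}\leq (4\lambda)^m$ once the factor-of-$\lambda^m$ absorption and the definition $\cG(s) = \frac{1}{isT}G'$ (which shifts the index by one, so $E_{j+1}$ is the coefficient of $(sT)^j$ in $\cG$) are accounted for. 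Concretely: write $\cI - \cE_s = \sum_{r\geq 1} c_r t^r$ with $\dnorm{c_r}\leq (2\lambda)^r$ (from expanding each $e^{-it\lambda\ad_{H_j}}$ termwise and using $\dnorm{\ad_{H_j}^r}\leq 2^r$), then $(\cI-\cE_s)^k = \sum_{m\geq k} t^m \sum_{r_1+\dots+r_k=m} c_{r_1}\cdots c_{r_k}$, so the coefficient of $t^m$ in $G'$ has diamond norm at most $\sum_{k=1}^m \frac1k \sum_{r_1+\dots+r_k=m,\,r_i\geq 1}(2\lambda)^m = (2\lambda)^m \sum_{k=1}^m \frac1k\binom{m-1}{k-1} \leq (2\lambda)^m \cdot 2^{m-1}\leq (4\lambda)^m$, using $\sum_k \binom{m-1}{k-1}=2^{m-1}$ and dropping the $1/k\leq 1$. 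Dividing by $isT$ and matching $(sT)^j$-coefficients then yields $\dnorm{E_{j+1}}\leq (4\lambda)^{j+1}$, i.e. the claim.

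The main obstacle I anticipate is bookkeeping rather than conceptual: making sure the index shift from $G'$ to $\cG(s) = (isT)^{-1}G'$ is done correctly (so that $E_j$ really is the coefficient I think it is, and the factor $\lambda^j$ absorbed in the statement is consistent), and verifying that the regime $2sT<1/\lambda$ — under which the logarithm series and all the geometric sums converge — is exactly the one in which these manipulations are valid. A secondary subtlety is that I am bounding $\dnorm{\ad_{H_j}^r}$ by $2^r$ rather than the sharper $(2\norm{H_j})^r = 2^r$, which is fine here, but I should be careful that the binomial/Cauchy-product counting of compositions $r_1+\dots+r_k=m$ gives $\binom{m-1}{k-1}$ and that $\sum_{k=1}^m \binom{m-1}{k-1}=2^{m-1}$, so the final constant is $4$ and not larger. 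If the naive constant comes out slightly worse than $4$, I would tighten by keeping the $1/k$ weight or by using the sharper first-order split $\cI-\cE_s = it\lambda\ad_H+\cR(s)$ to avoid double-counting the $\ad_H$ contribution at low order.
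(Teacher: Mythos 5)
Your proposal is correct and follows essentially the same route as the paper's proof: expand the logarithm series $G'=\sum_{k\geq 1}\tfrac{1}{k}(\cI-\cE_s)^k$, Taylor-expand $\cI-\cE_s$ in powers of $t=sT$ with coefficients controlled via $\dnorm{\ad_{H_j}}\leq 2\norm{H_j}=2$, collect the coefficient of $t^m$ by summing over compositions $r_1+\dots+r_k=m$, and shift the index by one when passing to $\cG(s)=(isT)^{-1}G'$. The only substantive difference is the combinatorial step, and there your version is actually the cleaner one. The paper keeps the $1/j!$ factors in its coefficients $B_j$, bounds the multinomial sum by $l^k$, then uses $\sum_{l=1}^k l^k\lesssim k^k$ and finally invokes $k^k/k!\leq 2^k$ to land on the constant $4$ --- but that last inequality fails for $k\geq 6$ (Stirling gives $k^k/k!\sim e^k/\sqrt{2\pi k}$), so as written the paper's bookkeeping only yields a constant of $2e$ rather than $4$. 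Your route discards the factorials up front ($\dnorm{c_r}\leq (2\lambda)^r$) and counts compositions exactly, $\sum_{k=1}^m\binom{m-1}{k-1}=2^{m-1}$, which closes to $(2\lambda)^m\cdot 2^{m-1}<(4\lambda)^m$ with no slack needed; this is the calculation one should actually use to justify the stated constant. Your flagged caveats (the index shift defining $E_{j+1}$ as the coefficient of $(sT)^j$ in $\cG$, and restricting to the regime $2sT\lambda<1$ so that the logarithm series and the rearrangement of the double sum converge absolutely) are exactly the right ones and are handled the same way in the paper.
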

\begin{proof}
    From \cref{Eq:cG_Power_Series} we can write
    \begin{align}
        \cG(s) &= \ad_H + \sum_{j=1}^\infty E_{j+1}s^jT^j \\
        &= \frac{1}{s} \left( (\cI-\cE_s) + \sum_{k=2}^\infty \frac{(-1)^{k}}{k}(\cI-\cE_s)^k \right). 
    \end{align}
    Writing 
    \begin{align}
        \cE_s &= \cI + sT\ad_H + \sum_j p_j \left( 
\sum_{k=2} \frac{(\lambda Ts)^k}{k!} \ad^{k}_{H_j}  \right)      \\
        &= \cI+ sT \ad_H + \sum_{j=2}^\infty B_j s^j,
    \end{align}
    where we have just collected the terms into $B_j$.
    We then realise that the $j^{th}$ term in the series of $\cE_s$ is bounded by $\dnorm{B_j}\leq   \frac{(\lambda T)^j }{j!}\max_\ell \dnorm{\ad^k_{H_\ell}}\sum_j p_j \leq \frac{(2\lambda T)^j}{j!}$.
    Grouping together all of the terms of order $O(s^k)$ and taking the triangle inequality, we get 
    \begin{align}
        \dnorm{E_{k+1}}s^kT^k &\leq s^kT^k\sum_{l=1}^k \sum_{\substack{j_1,\dots, j_l \\ j_1+j_2+\dots j_1=k }}\prod_{j_k}  \dnorm{B_{j_k}} \\
        &\leq s^k \sum_{l=1}^k \sum_{\substack{j_1,\dots, j_l \\ j_1+j_2+\dots j_1=k }}\prod_{k=1}^l  \frac{(2\lambda T)^{j_k}}{j_k!} \\
        &\leq s^k \sum_{l=1}^k (2\lambda T)^k \sum_{\substack{j_1,\dots, j_l \\ j_1+j_2+\dots j_1=k }} \prod \frac{1}{j_k!} \\
        &\leq s^k  \sum_{l=1}^k \frac{(2\lambda T)^k}{k!} \sum_{\substack{j_1,\dots, j_l \\ j_1+j_2+\dots j_1=k }} \frac{k!}{j_1! \dots j_k!}\\
        &\leq s^k \sum_{l=1}^k \frac{(2\lambda T)^k}{k!} l^k  
    \end{align}
    where we have used a bound on the multinomial coefficient being less that $\leq l^k$.
    Now use that $\sum_{l=1}^k l^k\leq \int_0^k \ l^k \  dl = \frac{k^{k+1}}{k+1}\leq k^k$.
    Then using that $2^k>\frac{k^k}{k!}$ for $l\leq k$, we see that we get the bound
    \begin{align}
        \dnorm{E_k} &\leq  (4\lambda)^k 
    \end{align}
\end{proof}

\end{document}